\documentclass[11pt,reqno]{amsart} 
\usepackage[T1]{fontenc}
\usepackage[utf8]{inputenc}
\usepackage{amsmath,amssymb,amsthm,mathtools}
\usepackage[colorlinks=true,linkcolor=blue,citecolor=blue,urlcolor=blue]{hyperref}
\usepackage{booktabs}
\usepackage{graphicx}
\numberwithin{equation}{section}
\newtheorem{theorem}{Theorem}[section]

\theoremstyle{definition}

\theoremstyle{remark}

\makeatletter
\renewcommand\theequation{\arabic{equation}}
\makeatother

\newcommand{\pr}{\mbox{Pr}}
\usepackage{xcolor}  

\title{Energetic Protection, Monotonicity and Switching Far from Equilibrium}
\author{\NoCaseChange{Uğur Çetiner}}
\begin{document}
\maketitle

\begin{center}
\footnotesize
Department of Physics, Koç University\\
Rumelifeneri Yolu 34450 Sarıyer, İstanbul, Türkiye\\
\texttt{ucetiner@ku.edu.tr}
\end{center}
\vspace{6pt}
\begin{abstract}
At equilibrium, the ratio of two steady-state probabilities is a Boltzmann factor, set by a free-energy difference. Such ratios are the natural, normalization-independent readouts of both thermodynamics and information processing, and are often used as a measure of fidelity in biophysical systems. What becomes of these ratios once a system is driven from equilibrium, where the Boltzmann factor no longer holds? Representing Markov processes as graphs and their steady states as averages over a distribution on spanning trees, the \emph{arboreal distribution}, we track the ratio $\pi_i/\pi_j$ under driving along \emph{energetic edges}, where detailed balance is broken, relative to its equilibrium value. Our central finding is that a chosen ratio can stay exactly locked to its equilibrium value arbitrarily far from equilibrium, a phenomenon we call \emph{energetic protection}, whenever an algebraic equality between spanning-tree weights holds. Just as detailed balance constrains rates around cycles, energetic protection constrains weights across trees, providing a new mechanism for robustness against fluctuations in the driving force, such as variations in ATP concentration. Away from this equality, single-edge driving collapses the response onto two arboreal coefficients and forces it to be monotonic, so nonmonotonic single-edge control is impossible at any strength. With two energetic edges, protection and monotonicity combine into a \emph{thermodynamic switch} that holds a function at its equilibrium value for as long as desired and releases it sharply. Equilibrium is known for the limits it places on information processing. We show that new constraints, both no-go principles and exact invariances, survive far from equilibrium. These results reveal how the localization of energy expenditure governs the functional logic of nonequilibrium systems in physics and biology.
\end{abstract}

\section*{Introduction}
Unlike classical thermodynamics, the thermodynamics of information processing remains a vibrant and rapidly developing field, continuing to reveal results that are both conceptually surprising and important. For instance, while a system at thermodynamic equilibrium cannot perform work or sustain directed motion \cite{julicher1997modeling}, computation itself need not intrinsically require dissipation. As shown by Bennett, Toffoli and others, the unavoidable thermodynamic cost is tied not to information processing per se, but to logical irreversibility \cite{landauer1961irreversibility}: in principle, logically reversible computation can be carried out with arbitrarily little dissipation \cite{bennett1973logical,toffoli1980reversible_tm151,feynman2018feynman}. This insight stands as one of the major conceptual advances of twentieth-century physics. 

We also know that, in the absence of energy expenditure, thermodynamic equilibrium imposes nontrivial tradeoffs and limits on certain functional information-processing capabilities. A celebrated example is Hopfield's kinetic proofreading mechanism, which showed that in molecular recognition tasks such as DNA replication and protein synthesis, thermodynamic equilibrium imposes a fundamental bound on error correction \cite{hopfield1974}. Put differently, the capacity of a system to discriminate between correct and incorrect substrates is intrinsically limited at thermodynamic equilibrium. More recently, it has been discovered that thermodynamic equilibrium imposes fundamental limits not only on error correction but also on other functional aspects of information processing, such as the sharpness of input-output relations in Markovian systems \cite{martinez2024hill}. 

In this paper, we push this line of inquiry beyond the equilibrium setting. We show that even when a system operates arbitrarily far from thermodynamic equilibrium, certain information-processing capabilities remain subject to previously unrecognized constraints, both no-go theorems that rule out certain forms of control and exact invariances that hold a chosen function fixed under driving. Our analysis is formulated for continuous-time Markov processes on discrete state spaces. Assuming the existence of a unique steady state, we express this state through a graph-theoretic framework that we have recently introduced \cite{ccetiner2022reformulating}. In this representation, the states of the Markov process correspond to the vertices of a graph, allowed transitions correspond to edges, and transition rates appear as edge labels. This correspondence allows us to treat the underlying graph as an independent mathematical object in its own right and to use its structure to gain algebraic access to steady states. It is precisely this perspective that enables us to prove theorems that hold universally across all finite-state Markov processes \cite{cetiner2023universal}.

Steady states come in two sharply distinct forms: those consistent with thermodynamic equilibrium, which we denote by \(\pi^{\mathrm{eq}}\), and those corresponding to nonequilibrium steady states, which we denote by \(\pi\). Thermodynamic equilibrium admits a precise characterization, manifesting itself as an algebraic constraint on the transition rates around every cycle of the graph (see Setup for a precise statement). The central observable in our analysis is the ratio of steady-state probabilities, $\pi_i/\pi_j,$ which compares the likelihood of occupying state \(i\) to that of occupying state \(j\) in the steady state.

Ratios of this kind arise naturally as measures of functional performance throughout nonequilibrium biophysics, and in each example below the quantity compared is the ratio of steady-state occupancies of two molecular states. For example, in chromatin-mediated gene activation, transcriptional specificity is quantified by the discrimination ratio $R = [AT]/[DT] = \pi_{AT}/\pi_{DT}$, the ratio of steady-state occupancies of correctly and incorrectly marked promoters in their acetylated ($AT$) and deacetylated ($DT$) accessible states, an ATP-driven remodeling step can amplify this ratio far beyond its equilibrium value \cite{blossey2008kinetic}. A similar ratio appears in protein translation fidelity, where the error is defined as the rate of incorrect relative to correct amino acid incorporation into the growing peptide chain, each rate being proportional to the steady-state probability that a certain molecular state is occupied \cite{hopfield1974,cetiner2023universal}. In protein quality control, $\pi_N$ and $\pi_M$ are the steady-state probabilities of the native (N) and misfolded states (M), and ATP-consuming chaperones drive the ratio $\pi_N/\pi_M$ above its equilibrium value $\pi_N^{\mathrm{eq}}/\pi_M^{\mathrm{eq}}$ \cite{xu2018cochaperones}. Therefore, if \(i\) is interpreted as an undesired state and \(j\) as a desired one, then \(\pi_i/\pi_j\) serves as a natural measure of error, with smaller values corresponding to improved performance. In this context, we define the normalized response as
\[
K_{ij}\equiv \frac{\pi_i/\pi_j}{\pi_i^{\mathrm{eq}}/\pi_j^{\mathrm{eq}}},
\]
which measures how far nonequilibrium driving enhances or suppresses performance relative to equilibrium.

We defer the proofs to the Results section and summarize here the three main findings of this work. First, consider a Markov process initially at thermodynamic equilibrium. Without loss of generality, we drive the system out of equilibrium by multiplying a single transition rate by a factor \(m>1\). For example, suppose the equilibrium rate of the transition \(z_1\to z_2\), denoted by \(\ell_{\mathrm{eq}}(z_1\to z_2)\), is changed to $\ell(z_1\to z_2)=m\,\ell_{\mathrm{eq}}(z_1\to z_2)$. One may think of this perturbation as arising from coupling the transition \(z_1\to z_2\) to an energetically favorable reaction such as ATP hydrolysis. More specifically, $m=e^{\beta\Delta\mu}$ where $\beta=1/k_BT$ and $\Delta\mu$ is the free energy supplied by the coupled reaction (for ATP hydrolysis, $\Delta\mu=\mu_{\mathrm{ATP}}-\mu_{\mathrm{ADP}}-\mu_{\mathrm{P_i}}$). Accordingly, $\ln m$ can be considered as the thermodynamic driving force. Although only a single transition rate is modified, the system is thereby driven out of equilibrium and settles into a nonequilibrium steady state. Since the edge \(z_1\to z_2\) is the one along which thermodynamic equilibrium is broken, we refer to it as the \emph{energetic edge}. The response then becomes a function of \(m\), which we denote by \(K_{ij}(m)\). We will show that \(K_{ij}(m)\) is always monotonic in \(m\). In other words, when an equilibrium system is driven through a single energetic edge, the response cannot be nonmonotonic. This reveals a new limitation on the information-processing capacity of nonequilibrium steady states. Remarkably, the entire nonequilibrium complexity is compressed into just two numbers, the \emph{arboreal coefficients} \(A_+(i)\) and \(A_+(j)\) (Eq.~\eqref{e-acoeffs}). Their ordering alone determines the fate of the response: \(A_+(i)>A_+(j)\) implies that \(K_{ij}(m)\) grows monotonically with \(m\), while \(A_+(i)<A_+(j)\) implies that it decays monotonically.

Our first main finding is a new nonequilibrium phenomenon that we call \emph{energetic ratio protection} or simply \emph{energetic protection}. It occurs if and only if $A_+(i)=A_+(j)$, a condition that translates into algebraic constraints on the transition rates, expressed through the weights of certain subgraphs called \emph{spanning trees}: cycle-free subgraphs that include every vertex and are directed so that each vertex has a unique path to a chosen reference vertex. When the condition holds, $K_{ij}(m)=1$, or equivalently $\pi_i/\pi_j=\pi_i^{\mathrm{eq}}/\pi_j^{\mathrm{eq}}$, for all values of $m$: the ratio stays pinned to its equilibrium value however strongly the system is driven. This invariance under energetic driving is what motivates the name. Intracellular ATP levels can shift substantially with metabolic state, stress, and development, and vary considerably from cell to cell \cite{ataullakhanov2002determines,de2014dynamics,ozalp2010time}. Because a protected ratio is fixed at its equilibrium value independently of the driving strength, energetic protection provides a mechanism by which a chosen observable can remain calibrated to a fixed baseline even as the ATP-dependent drive fluctuates. Dissipation still occurs and the system may sit arbitrarily far from thermodynamic equilibrium, yet the drive leaves the protected ratio unchanged, its effect exactly cancelled by an algebraic balance among certain spanning-tree weights determined by the arboreal coefficients.

As an illustration of energetic protection, consider the four-state Markov process in Fig.~\ref{fig:Fig1}A, and suppose initial thermodynamic equilibrium is broken by modifying the rate \(1\to 3\), which is highlighted in red. The spanning trees rooted at vertex \(1\) are also shown in Fig.~\ref{fig:Fig1}B. Our analysis shows that the response is determined solely by the weights of two spanning trees, with the weight of each spanning tree defined as the product of its edge labels. If the weight of the seventh spanning tree exceeds that of the sixth, \(w(T_{\#7})>w(T_{\#6})\), then \(A_+(2)>A_+(4)\), and the response is monotonically increasing, as shown in Fig.~\ref{fig:Fig1}C. If instead \(w(T_{\#7})<w(T_{\#6})\), then \(A_+(2)<A_+(4)\), and the response is monotonically decreasing, as shown in Fig.~\ref{fig:Fig1}E. Finally, if \(w(T_{\#6})=w(T_{\#7})\), then \(A_+(2)=A_+(4)\), so that \(K_{24}(m)=1\) for all \(m\), as shown in Fig.~\ref{fig:Fig1}D. In other words, the ratio \(\pi_2/\pi_4\) is protected against energetic perturbations whenever the constraint \(w(T_{\#6})=w(T_{\#7})\) holds. This shows a delicate connection between information-processing capabilities and the combinatorial structure of the graph, which would have been difficult to reveal without our formulation of steady states in Eq.~\eqref{e-rhoness}. Crucially, this protection requires no further fine-tuning: as long as the system is initially at thermodynamic equilibrium and the single equality \(A_+(i)=A_+(j)\) holds---which, for the example in Fig.~\ref{fig:Fig1}A, reduces to \(w(T_{\#6})=w(T_{\#7})\)---the response remains pinned to its equilibrium value for all driving strengths \(m\), independently of the values of the remaining transition rates.

Our second result shows that, with two energetic edges, one can construct a
\emph{thermodynamic switch} in which an initially protected response is held
exactly at its equilibrium value for as long as desired and then released, by a
second drive, to a substantially different level. Let the second energetic edge
be \(z_3\to z_4\), with \(\ell(z_3\to z_4)=n\,\ell_{\mathrm{eq}}(z_3\to z_4)\) so that the response then depends on two driving parameters, \(K_{ij}(m,n)\). As we will show, even in the presence of two energetic edges, holding one parameter fixed while varying the other leaves the response monotonic in the varying parameter. Combining this monotonicity with energetic protection generates switching behaviour. To illustrate, consider again the four-state Markov process in Fig.~\ref{fig:Fig1}A driven by two energetic edges. Starting from thermodynamic equilibrium, we impose the constraint \(w(T_{\#6})=w(T_{\#7})\), which protects the response \(K_{24}\) against energetic perturbations along the edge \(1\to 3\) with the driving parameter \(m\), and then introduce a second energetic edge, \(2\to 3\), with the driving parameter \(n\). The protocol proceeds in two stages. In the first stage, we fix \(n=1\) and increase \(m\) up to \(m=100\). Throughout this stage \(\pi_2/\pi_4\) remains pinned to its equilibrium value, exactly as required by energetic protection, even though the system is driven away from thermodynamic equilibrium. In the second stage, we hold \(m=100\) fixed and increase \(n\). For a suitable choice of parameters (given in Table~\ref{t-2}) the response then decreases monotonically from its equilibrium value down to a much lower level, as shown in Fig.~\ref{fig:Fig2}, where a nearly 10 fold decrease is achieved only by a small shift in the cycle affinity or driving force ($\Delta \ln (n) \approx 2.3$). In this way the ratio \(\pi_2/\pi_4\) can be held exactly at its equilibrium value for as long as desired and then swept to a much lower value by activating the second drive.

Our third and final result concerns nonmonotonicity. Driving a single energetic edge yields a monotonic response, and the same holds when two energetic edges are driven independently. A nonmonotonic response therefore requires the two energetic edges to be \emph{coupled}, for example through a relation
\(n=n(m)\), as occurs when both edges are powered by the same free-energy source such as ATP hydrolysis \cite{hill2012free}. In the representative case \(n=m\), we derive conditions for the emergence of nonmonotonicity (Fig.~\ref{fig:Fig4}). In this case the response is a ratio of quadratics in \(m\), so whenever it is nonmonotonic it exhibits only a single bump, i.e.\ a single interior minimum or maximum. More broadly, these results show how the structure and coupling of nonequilibrium drives can be used to design and control functional behaviour in systems operating far from thermodynamic equilibrium.
\begin{figure}[t]
  \centering
  \includegraphics[width=\linewidth]{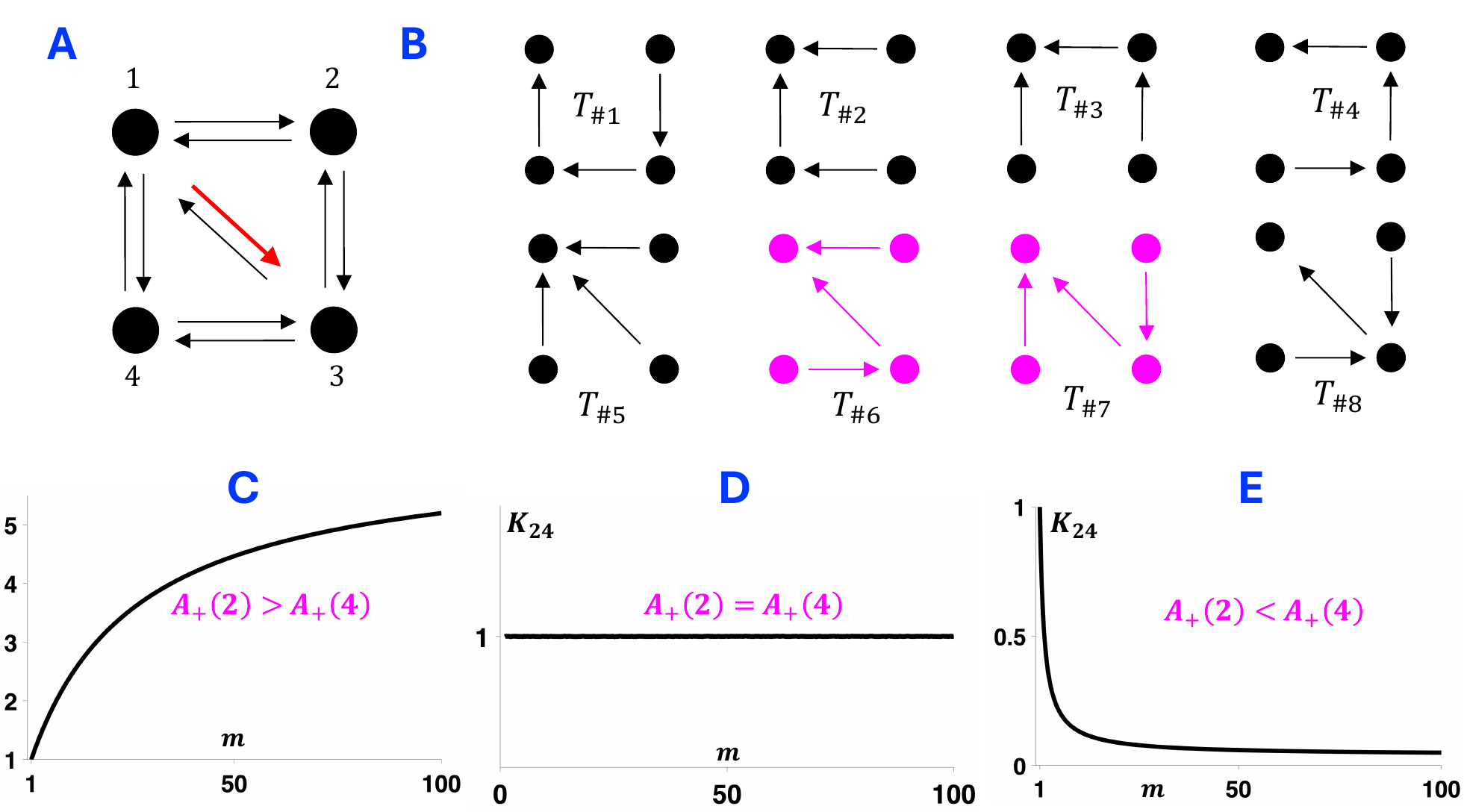}
\caption{\textbf{Single-edge driving yields either a monotonic response
or energetic protection.}
(\textbf{A})~A four-state Markov process is driven out of equilibrium by scaling
the single transition $1\to3$ as
$\ell(1\to3)=m\,\ell_{\mathrm{eq}}(1\to3)$ (red), with every other rate
held at its equilibrium value.
(\textbf{B})~The eight spanning trees rooted at vertex~$1$; the two trees
$T_{\#6}$ and $T_{\#7}$ (magenta) compete to set the response.
(\textbf{C--E})~The normalized response $K_{24}(m)\equiv(\pi_2/\pi_4)/(\pi_2^{\mathrm{eq}}/\pi_4^{\mathrm{eq}})$ is fixed by the relative weights of these two trees: it increases monotonically when $w(T_{\#7})>w(T_{\#6})$~(\textbf{C}), decreases monotonically when $w(T_{\#7})<w(T_{\#6})$~(\textbf{E}), and stays pinned at $K_{24}(m)=1$ for all~$m$ when $w(T_{\#7})=w(T_{\#6})$~(\textbf{D}), so that the ratio $\pi_2/\pi_4$
is \emph{energetically protected} against the perturbation.}
  \label{fig:Fig1}
\end{figure}
\section*{Setup and preliminary Results}
We model the system as a continuous-time, finite-state Markov process $X(t)$. In the \emph{linear framework} \cite{ccetiner2022reformulating,gunawardena2012linear}, the process is represented by a directed graph $G=(V_G,E_G)$ with labeled edges and no self-loops: $V_G=\{1,2,\dots,n\}$ is the set of states, and $E_G\subseteq \{(i,j)\in V_G\times V_G:\ i\neq j\}$ is the set of allowed transitions. Each directed edge $(i,j)\in E_G$, written $i\to j$, carries a label $\ell(i\to j)$, interpreted as the transition rate from state $i$ to state $j$ (with units $\mathrm{time}^{-1}$),
\begin{equation}
\ell(i \to j)\;=\;\lim_{\Delta t \to 0}\;
\frac{\Pr\!\big\{X(t+\Delta t)=j \,\big|\, X(t)=i\big\}}{\Delta t},
\label{e-rates}
\end{equation}
whenever the limit exists and is nonzero.

Let $p_i(t)=\Pr\{X(t)=i\}$ and $p(t)=(p_1(t),\dots,p_n(t))^\top$ be an $n\times 1$ column vector. The master equation describes how the probabilities evolve in time as follows,
\begin{equation}
\frac{d\,p(t)}{dt}\;=\; {\mathcal L}(G)\,p(t),
\label{e-master}
\end{equation}
where the graph Laplacian ${\mathcal L}(G)$ (the infinitesimal generator) is
\begin{equation}\label{e-Lmatrix}
{\mathcal L}(G)_{ji} =
\begin{cases}
\ell(i \to j) & \text{if $i \neq j$},\\[2pt]
-\sum_{k \neq i}{\ell(i \to k)} & \text{if $i=j$}.
\end{cases}
\end{equation}
We assume \emph{bidirectionality} (if $i\to j$ then $j\to i$) and that $G$ is strongly connected, meaning that every vertex can be reached from every other vertex. Then there is a unique steady state $\pi$ satisfying ${\mathcal L}(G)\pi=0$ (equivalently, $\pi\in\ker{\mathcal L}(G)$). In addition, if
\begin{equation}
    \pi_i\,\ell(i\to j)=\pi_j\,\ell(j\to i),\qquad \forall\, i,j\in V_G,
\end{equation}
then $\pi$ is an \emph{equilibrium steady state} (detailed balance holds). Otherwise, the dynamics relax to a \emph{nonequilibrium steady state}. In this case, time-reversal symmetry or detailed balance is broken and maintaining $\pi$ typically requires continuous energy expenditure.

A trajectory $\{X(t)\}_{t\ge 0}$ induces a directed path
$P=[i_1,i_2,\dots,i_k]$ on $G$ (written $i_1\to i_2\to \cdots \to i_k$).
A path is \emph{simple} if all vertices are distinct; it is \emph{closed} if $i_1=i_k$; a \emph{simple cycle} is a closed path with no repeated vertices except the start/end.
For a directed path $P:i_1\to \cdots \to i_k$, define the \emph{path action} as follows \cite{lebowitz1999gallavotti},
\begin{equation}
S(P)
=\sum_{r=1}^{k-1}\ln\frac{\ell(i_r\to i_{r+1})}{\ell(i_{r+1}\to i_r)}.
\label{e-spp}
\end{equation}
For a cycle, $C$, $S(C)$ is the cycle affinity.

A subgraph $H=(V_H,E_H)$ satisfies $V_H\subseteq V_G$ and $E_H\subseteq E_G$ and inherits edges/labels from $G$. Viewing $G$ as undirected, a \emph{tree} is a connected and acyclic subgraph; a \emph{spanning tree} contains all vertices.
A \emph{rooted spanning tree} rooted at $r$ is a spanning subgraph in which every vertex has a unique directed path to $r$ (equivalently, every non-root vertex has exactly one outgoing edge).
For any subgraph $H$, its weight is defined as
\[
w(H)=\prod_{\,i\to j\in E_H}\ell(i\to j),
\]
which extends naturally to a set $A$ of subgraphs by $w(A)=\sum_{H\in A} w(H)$.

\subsection*{Obtaining equilibrium vs nonequilibrium steady states}
At thermodynamic equilibrium, every cycle has zero affinity ($S(C)=0$ for all cycles $C$), which implies \emph{path-independence}: for any two simple paths $P,Q$ from $i$ to $j$, $S(P)=S(Q)$ \cite{ccetiner2022reformulating}. Thus we may define
\[
S_{\mathrm{eq}}(i,j)\equiv S(P)\qquad \text{for any } P \in M(i,j),
\]
where $M(i,j)$ is the set of simple paths from $i$ to $j$. To obtain the equilibrium steady-state solution, we employ the following protocol. Fix a reference vertex $r$ (arbitrarily) and define
\begin{equation}\label{e-mu}
\mu_i(G)=e^{-S_{\mathrm{eq}}(i,r)},
\end{equation}
so that $\mu(G)\in\ker\mathcal{L}(G)$. Since $\ker\mathcal{L}(G)$ is one-dimensional, $\pi_i\propto \mu_i(G)$, and normalization gives
\begin{equation}\label{e-normalized}
\pi_i=\frac{e^{-S_{\mathrm{eq}}(i,r)}}{\sum_j e^{-S_{\mathrm{eq}}(j,r)}}\,.
\end{equation}

Away from equilibrium, $S$ is path-dependent, so a unique solution can no longer be obtained from a single path action, as in Eq.~\eqref{e-normalized}. Recently, however, we found that the steady-state solution can still be constructed by averaging these path-dependent contributions over a probability distribution on spanning trees rooted at the reference vertex, which we call the \emph{arboreal distribution}.

Let $\Theta_i(G)$ be the set of spanning trees rooted at $i$. Define the probability of $T \in \Theta_r(G)$ by
\begin{equation}\label{arboreal}
\pr_{\Theta_r(G)}(T)=\frac{w(T)}{\sum_{T\in\Theta_r(G)} w(T)}=\frac{w(T)}{w(\Theta_r(G))}\,.
\end{equation}
For each $T\in\Theta_r(G)$, let $T_i\in M(i,r)$ be the unique simple path from $i$ to the root $r$ within $T$. Then a canonical basis element $\rho(G)\in\ker\mathcal{L}(G)$ is given by the following expression \cite{ccetiner2022reformulating},
\begin{equation}\label{e-rhoness}
\rho_i(G)=\sum_{T\in\Theta_r(G)}\pr_{\Theta_r(G)}(T)\,e^{-S(T_i)}
\equiv \left\langle e^{-S(T_i)}\right\rangle,
\end{equation}
where the angular brackets represent an average over the arboreal distribution and the steady-state solution is obtained by normalization:
\begin{equation}\label{e-NessFinal}
\pi_i=\frac{\left\langle e^{-S(T_i)}\right\rangle}{\sum_j \left\langle e^{-S(T_j)}\right\rangle}\,.
\end{equation}
At equilibrium, $S(T_i)=S_{\mathrm{eq}}(i,r)$ for all $T$, so Eqs.~\eqref{e-rhoness} and \eqref{e-NessFinal} reduce to Eqs.~\eqref{e-mu} and \eqref{e-normalized}, respectively. Therefore, Eq.~\eqref{e-rhoness} extends equilibrium formalism (Eq.~\eqref{e-mu}) to nonequilibrium steady states.
\subsection*{Matrix-tree theorem vs Eq.~\eqref{e-rhoness}} 
The matrix-tree theorem (MTT) provides another basis element \(\tilde{\rho}(G) \in \ker \mathcal{L}(G)\) through
\begin{equation}\label{e-rhomtt}
\tilde{\rho}_i(G)=\sum_{T\in\Theta_i(G)} w(T).
\end{equation}
The steady-state solution is then obtained by normalization,
\begin{equation}\label{e-mtt}
\pi_i=\frac{\tilde{\rho}_i(G)}{\sum_j \tilde{\rho}_j(G)}.
\end{equation}

MTT is at the intersection of several graph-theoretic studies in mathematics and physics. A version of it can be dated to Kirchhoff's work on electrical networks, while the version used here for directed graphs is usually attributed to Tutte \cite{gunawardena2012linear,tutte1948dissection}. Equivalent formulas were later rediscovered in biology and nonequilibrium physics, for instance in the work of King and Altman \cite{king1956schematic} and Hill \cite{hill2012free}, where graphs became a natural language for mesoscopic stochastic models of enzymes, transporters, pumps, and other molecular machines. Although elegant and exact, the MTT has two important drawbacks. First, consider graphs built from \(k\) binary sites, where each mesostate specifies whether a given feature is present or absent at each site, such as whether a binding molecule is bound or unbound, giving \(2^k\) mesostates in total. Even in this very structured setting, the number of spanning trees grows extraordinarily fast: it is 4 for \(k=2\), 384 for \(k=3\), and 42,467,328 for \(k=4\). Because the exact steady-state probabilities require summing contributions from all of these trees even when there is a single energetic edge, the matrix-tree theorem becomes overwhelmingly complicated after only a modest increase in system size (algebraic explosion). Second, Eq.~\eqref{e-mtt} expresses the steady state as a rational function of edge labels with a highly nontransparent monomial structure: both numerator and denominator are sums of monomials, each associated with a spanning-tree weight. In this form, the steady state is exact but thermodynamically obscure, because the expression is organized by combinatorics rather than by physically meaningful quantities such as path actions.

Eq.~\eqref{e-rhoness} may be viewed as a reformulation of Eq.~\eqref{e-mtt} that addresses both issues. On the conceptual side, it rewrites the steady state in a form that is physically interpretable: paths are assigned actions and then averaged with respect to the arboreal distribution. In this way, the nonequilibrium steady state acquires a structure that is much closer to the familiar logic of physics. On the structural side, the relevant path-action values need not grow with graph size. Indeed, as we show below, in the presence of a single energetic edge, \(S(P)\) can take at most three distinct values, independent of the size of the graph. Our formulation therefore exposes a remarkable simplification hidden inside nonequilibrium steady states: while the underlying spanning-tree expansion may be combinatorially enormous, its thermodynamic content can collapse to a small set of action sectors.
\section*{Results}
\subsection*{Path action when there is a single energetic edge} 
Let us perturb an equilibrium Markov process with labels $\ell_{\mathrm{eq}}(i\to j)$ by modifying a single transition rate,
\[
\ell(z_1\to z_2)\mapsto m\,\ell_{\mathrm{eq}}(z_1\to z_2),
\]
where $z_1\to z_2$ is the \emph{energetic edge}, since it is the edge at which detailed balance, and hence time-reversal symmetry, is broken. After this perturbation, the dynamics relax to a unique nonequilibrium steady state and the action $S(P)$ becomes \emph{path-dependent}. Nevertheless, by Eq.~\eqref{e-spp}, for any simple directed path $P:i\to\cdots\to j$, the deviation from the equilibrium action can take only three possible values, determined by how the energetic edge is traversed along the path. Writing $E_P$ for the directed edges used by $P$, define
\begin{equation}\label{e-path_entropies}
\sigma_{kl}(P)=
\begin{cases}
1, & z_k\to z_l \in E_P,\\[2pt]
-1, & z_l\to z_k \in  E_P,\\[2pt]
0, & z_k\to z_l \notin  E_P,\ z_l\to z_k \notin E_P,
\end{cases}
\end{equation}
then
\begin{equation}\label{e-path_entropies2}
S(P)=S_{\mathrm{eq}}(i,j)+\sigma_{12}(P)\ln(m).
\end{equation}
Thus, regardless of the size of the graph, $S(P)-S_{\mathrm{eq}}(i,j)$ can take at most three possible values: $+\ln m$, $-\ln m$, or $0$.
\subsection*{Arboreal coefficients}
When there is a single energetic edge, for each rooted spanning tree \(T\in\Theta_r(G)\), the path action associated with the unique directed path from \(i\) to the root has only three possibilities, so we partition \(\Theta_r(G)\) according to the position of the energetic edge along \(T_i\) as follows:
\begin{equation}\label{e-partition}
\begin{aligned}
\Theta_+(i)&=\{T\in \Theta_r(G):\ z_2 \to z_1 \in E_{T_i}\},\\
\Theta_-(i)&=\{T\in \Theta_r(G):\ z_1 \to z_2 \in E_{T_i}\},\\
\Theta_0(i)&=\{T\in \Theta_r(G):\ z_1\to z_2 \notin E_{T_i},\ z_2\to z_1 \notin E_{T_i}\}.
\end{aligned}
\end{equation}
Therefore, \(\mathrm{e}^{-S(T_i)}\) depends only on which class \(T\) belongs to, that is,
\(\mathrm{e}^{-S(T_i)}=\mathrm{e}^{-S_{\mathrm{eq}}(i,r)}\) on \(\Theta_0(i)\),
\(\mathrm{e}^{-S(T_i)}=m\,\mathrm{e}^{-S_{\mathrm{eq}}(i,r)}\) on \(\Theta_+(i)\), and
\(\mathrm{e}^{-S(T_i)}=\frac{1}{m}\mathrm{e}^{-S_{\mathrm{eq}}(i,r)}\) on \(\Theta_-(i)\).
Grouping Eq.~\eqref{e-rhoness} accordingly yields
\begin{equation}\label{e-arho}
\rho_i(G)
=\mathrm{e}^{-S_{\mathrm{eq}}(i,r)}
\left(A_0(i)+mA_+(i)+\frac{1}{m}A_-(i)\right),
\end{equation}
where the \emph{arboreal coefficients} are
\begin{equation}\label{e-acoeffs}
A_0(i)\equiv\sum_{T\in\Theta_0(i)}\pr_{\Theta_r(G)}(T),\
A_+(i)\equiv\sum_{T\in\Theta_+(i)}\pr_{\Theta_r(G)}(T),\
A_-(i)\equiv\sum_{T\in\Theta_-(i)}\pr_{\Theta_r(G)}(T).
\end{equation}
Since \(\Theta_0(i)\sqcup\Theta_+(i)\sqcup\Theta_-(i)=\Theta_r(G)\), we have
\[
A_0(i)+A_+(i)+A_-(i)=1,\qquad \forall\, i\in V_G.
\]
Thus, the arboreal distribution is a normalized probability distribution over spanning trees rooted at the reference vertex. Importantly, it is not merely a mathematical construct, rather, it encodes physically relevant information about the response of systems driven far from equilibrium and also reveals interesting connections to loop-erased random walks \cite{cetiner2024jarzynski}.
\subsection*{The response under single energetic edge}
\begin{theorem}
Consider a Markov process initially at thermodynamic equilibrium and perturb it by modifying a single transition rate, e.g., $\ell(z_1\to z_2)\mapsto m\,\ell_{\mathrm{eq}}(z_1\to z_2)$. Then the normalized response ratio $K_{ij}(m)$ is a monotone function of $m$ for any $i$, $j$.
\end{theorem}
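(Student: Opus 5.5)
The plan is to show that, after clearing the $m$-dependence hidden in the arboreal distribution, every unnormalized steady-state component is an \emph{affine} function of $m$. Then $K_{ij}(m)$ is a ratio of two affine functions, that is, a Möbius transformation in $m$, and such a function has a derivative of constant sign on $m>0$. The cleanest route is through the matrix-tree basis $\tilde\rho(G)$ of Eq.~\eqref{e-rhomtt}. Because $\tilde\rho$ and $\rho$ span the same one-dimensional kernel, the ratio $\pi_i/\pi_j=\tilde\rho_i/\tilde\rho_j$ may be computed from either basis.

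First, I fix all labels at their equilibrium values except $\ell(z_1\to z_2)=m\,\ell_{\mathrm{eq}}(z_1\to z_2)$. A rooted spanning tree uses each directed edge at most once, so its weight is either independent of $m$ or equal to $m$ times its equilibrium weight. Splitting $\Theta_i(G)$ according to whether $z_1\to z_2\in E_T$ gives
\[
\tilde\rho_i(m)=a_i+m\,b_i,\qquad a_i=\sum_{T\in\Theta_i,\ z_1\to z_2\notin E_T}w_{\mathrm{eq}}(T),\quad b_i=\sum_{T\in\Theta_i,\ z_1\to z_2\in E_T}w_{\mathrm{eq}}(T).
\]
Here $a_i,b_i\ge 0$ and $a_i+b_i=\tilde\rho_i(1)>0$ by strong connectivity. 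At $m=1$ we recover the equilibrium state, so $\pi_i^{\mathrm{eq}}/\pi_j^{\mathrm{eq}}=(a_i+b_i)/(a_j+b_j)$. This gives
\[
K_{ij}(m)=\frac{a_j+b_j}{a_i+b_i}\cdot\frac{a_i+m b_i}{a_j+m b_j}.
\]

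Second, I differentiate:
\[
\frac{d}{dm}\,\frac{a_i+mb_i}{a_j+mb_j}=\frac{b_ia_j-a_ib_j}{(a_j+mb_j)^2}.
\]
The numerator is independent of $m$ and the denominator is positive for $m>0$, since $a_j+mb_j>0$. Hence $K_{ij}$ is nondecreasing, nonincreasing, or identically $1$, according to the sign of $b_ia_j-a_ib_j$. This proves the theorem.

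Third, to connect with the paper's language, I would redo the count in the arboreal representation Eq.~\eqref{e-arho}. The point to handle carefully is that $\pr_{\Theta_r(G)}(T)$ itself depends on $m$, so I would work with unnormalized weights; the common factor $w(\Theta_r(G))$ cancels in the ratio. The classes contribute as follows.
\begin{itemize}
\item On $\Theta_-(i)$ the path $T_i$ uses $z_1\to z_2$, so the tree weight carries a factor $m$. This cancels the factor $1/m$ from $e^{-S(T_i)}$.
\item On $\Theta_+(i)$ the tree contains $z_2\to z_1$. It therefore cannot also contain $z_1\to z_2$, since that would create a cycle, so only the factor $m$ from the action survives.
\item On $\Theta_0(i)$ the tree may or may not contain $z_1\to z_2$ off the path $T_i$, contributing either $m$ or $1$.
\end{itemize}
Thus $\rho_i$ is again affine in $m$ up to the common normalization, consistent with the first step. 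Identifying the slope-versus-intercept comparison with the arboreal coefficients, so that the sign is governed by $A_+(i)$ versus $A_+(j)$ evaluated at equilibrium, is the step I expect to require the most bookkeeping. In particular, the off-path occurrences of $z_1\to z_2$ in $\Theta_0$ must be shown to contribute identically for $i$ and $j$, because whether $z_1\to z_2\in E_T$ is a property of $T$ alone. However, this identification is not needed for monotonicity itself, which already follows from the Möbius form.
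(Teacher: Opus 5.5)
Your proof is correct and complete, but it reaches the M\"obius form by a different route than the paper.

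\emph{The paper's route.} It stays inside the arboreal representation, Eq.~\eqref{e-arho}, and chooses the reference vertex $r=z_1$. No tree rooted at $z_1$ contains $z_1\to z_2$, so the arboreal probabilities are $m$-free, $A_-\equiv 0$ and $A_0=1-A_+$. One then reads off $K_{ij}(m)=[1+(m-1)A_+(i)]/[1+(m-1)A_+(j)]$, with slope sign $A_+(i)-A_+(j)$.

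\emph{Your route.} You use the matrix-tree basis, with a different root for each component, together with the multilinearity of tree weights in the edge labels. This is more elementary, and it never uses that the base point is an equilibrium. The form $\tilde\rho_i=a_i+mb_i$ holds for arbitrary base rates, so $\pi_i/\pi_j$ is M\"obius in any single rate multiplier from any steady state. That already contains the paper's two-edge theorem (fix $n$, vary $m$); equilibrium enters only through the constant normalization.

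\emph{What each buys.} The paper works on a single common tree set $\Theta_{z_1}(G)$ with probabilistic coefficients. Protection therefore becomes an equality between weights of trees sharing one root, e.g.\ $w(T_{\#6})=w(T_{\#7})$. Your condition $b_ia_j=a_ib_j$ instead compares sums over the different sets $\Theta_i(G)$ and $\Theta_j(G)$.

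\emph{The two criteria coincide.} Dividing by $a_i+b_i$ and $a_j+b_j$ turns your formula into Eq.~\eqref{e-Ksim}, with $b_i/(a_i+b_i)$ in place of $A_+(i)$. Now build $\rho$ with $r=z_1$. Both $\tilde\rho$ and $\rho$ lie in the one-dimensional kernel, $\rho_{z_1}=1$, and $\tilde\rho_{z_1}=w(\Theta_{z_1}(G))$ is $m$-free. Hence $\tilde\rho=w(\Theta_{z_1}(G))\,\rho$, and matching coefficients of $m$ gives $A_+(i)=b_i/(a_i+b_i)$. This is the equilibrium weight fraction of $i$-rooted trees that contain the energetic edge.

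So the bookkeeping you anticipated in your third step is a one-liner once you take $r=z_1$. For a general root, your remark about off-path occurrences is not quite right. The off-path contributions from $\Theta_0(i)$ and $\Theta_0(j)$ do not coincide, because these sets differ; only the total weight of trees containing $z_1\to z_2$ is common to $i$ and $j$.
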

\begin{proof}
Using the path-centric reformulation, Eq.~\eqref{e-arho}, we start with the following ratio,
\begin{equation}
  \frac{\rho_i(G)}{\rho_j(G)}=\frac{\mathrm{e}^{-S_\text{eq}(i,r)}}{\mathrm{e}^{-S_\text{eq}(j,r)}}\times \frac{ A_0(i)+mA_+(i)+ A_-(i)/m}{A_0(j)+mA_+(j)+ A_-(j)/m} \,.
\end{equation}
The left hand side is nothing but $\pi_i / \pi_j= \rho_i(G)/\rho_j(G)$ and the right hand side has $\mathrm{e}^{-S_\text{eq}(i,r)} / \mathrm{e}^{-S_\text{eq}(j,r)}=\pi^\text{eq}_i/ \pi^\text{eq}_j$ and rearranging these terms gives the response as follows,
\begin{equation} \label{e-K}
K_{ij}(m)=\frac{A_0(i)+mA_+(i)+ A_-(i)/m }{A_0(j)+mA_+(j)+ A_-(j)/m }.
\end{equation}

We can now make a useful choice that puts the response function in its simplest form by setting $r=z_1$. Then the arboreal probabilities are defined on $\Theta_{z_1}$. Since every tree in $\Theta_{z_1}$ is rooted at $z_1$, the edge $z_1\to z_2$ cannot appear in any such tree, because the root has no outgoing edge. Hence all arboreal coefficients in Eq.~\eqref{e-K} are \emph{$m$-free}. In addition, Eq.~\eqref{e-partition} implies that
\[
A_-(i)=0 \qquad \text{for all } i,
\]
while normalization gives
\[
A_0(i)=1-A_+(i).
\]
Substituting these relations into Eq.~\eqref{e-K} yields the simplest form of the response function.
\begin{equation}\label{e-Ksim}
K_{ij}(m)=\frac{1+(m-1)A_+(i)}{1+(m-1)A_+(j)}.
\end{equation}
Differentiating Eq.~\eqref{e-Ksim} with respect to $m$ gives,
\begin{equation}\label{e-diffK}
\frac{dK_{ij}(m)}{dm}=\frac{A_+(i)-A_+(j)}{\bigl[1+(m-1)A_+(j)\bigr]^2}.
\end{equation}

The denominator is always strictly positive, while the numerator is independent of $m$. Therefore, the sign of $dK_{ij}/dm$ is independent of $m$, so $K_{ij}(m)$ must be monotone. More precisely, $K_{ij}(m)$ is increasing if $A_+(i)>A_+(j)$, decreasing if $A_+(i)<A_+(j)$.
\end{proof}

Thus, for any Markov process driven out of equilibrium by changing a single transition rate, the response can only be monotonic. As shown above, the arboreal coefficients distill away the nonequilibrium complexity that emerges once the system is driven away from equilibrium, and neatly capture the aftermath of the energetic perturbation.
\begin{figure}[t]
  \centering
  \includegraphics[width=\linewidth]{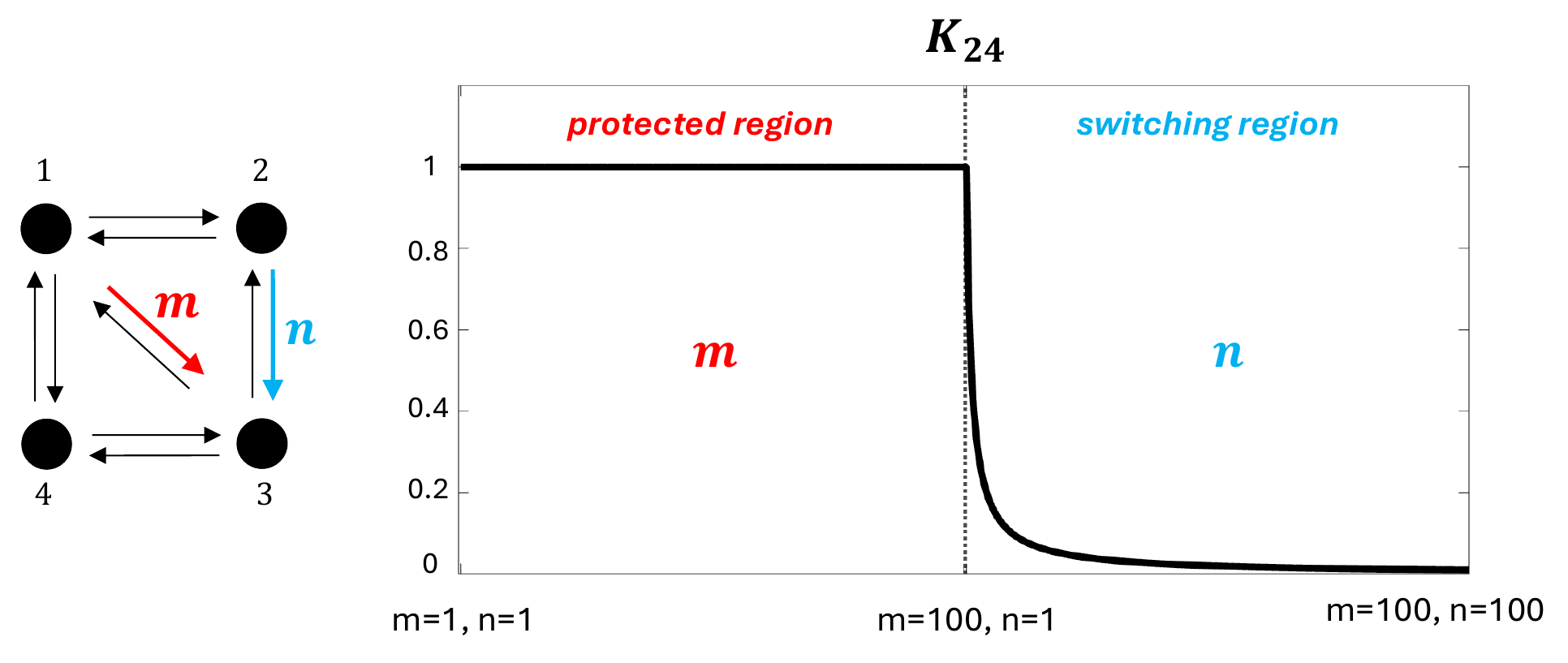}
  \caption{\textbf{A thermodynamic switch in a four-state Markov process.}
The network of Fig.~\ref{fig:Fig1}A is driven by two energetic edges, $1\to3$
(factor~$m$, red) and $2\to3$ (factor~$n$, cyan), starting from an equilibrium configuration that satisfies the protection condition $w(T_{\#6})=w(T_{\#7})$ (rates in Table~\ref{t-2}). The black curve traces the response $K_{24}$ as a two-stage protocol advances along the horizontal axis. In the \emph{protected region} (first stage: $n=1$ held fixed, $m$ ramped from $1$ to $100$) protection pins $\pi_2/\pi_4$ to its equilibrium value, so $K_{24}=1$ throughout. In the \emph{switching region} (second stage: $m=100$ held fixed, $n$ ramped through the edge $2\to3$) the response drops substantially. Energetic protection of the first edge together with a second energetic drive thus produces a switch far
from equilibrium.
}
\label{fig:Fig2}
\end{figure}
\subsection*{Protected Responses to Energetic Perturbations}
Eq.~\eqref{e-diffK} reveals another important consequence. If $A_+(i)=A_+(j)$, then the response becomes completely insensitive to the energetic perturbation: even as the driving strength $m$ is increased arbitrarily, the ratio $\pi_i/\pi_j$ remains fixed at its equilibrium value. In this sense, the response is \emph{protected} against energetic driving. As a concrete example, Fig.~\ref{fig:Fig1} shows a four-state Markov process in which the energetic edge $1\to 3$ is highlighted in red. In accordance with the analysis above, we choose the reference vertex as the source vertex of the energetic edge, $r=1$, so that Eqs. ~\eqref{e-Ksim} and \eqref{e-diffK} apply directly. We display all spanning trees rooted at vertex $1$ and label them by $T_{\#i}$ for convenience. We then study the response defined by the normalized steady-state ratio of occupation probabilities at vertices $2$ and $4$, $K_{24}(m)$. In accordance with Eq.~\eqref{e-partition}, the set of trees that contribute to the $A_+(i)$'s are given below for $i=2,4$,
\begin{equation}
    \Theta_+(2)=\{T_{\#7}, T_{\#8} \},\ \Theta_+(4)=\{T_{\#6}, T_{\#8} \}.
\end{equation}
Therefore, to achieve energetic protection of the response function against the energy expenditure at $1 \to 3$, we need to have $A_+(2)=A_+(4)$, equivalently, we need to have 
\begin{equation} \label{e-treecondition2}
    w(T_{\#6})=w(T_{\#7}).
\end{equation}

When this constraint holds as shown in Fig.~\ref{fig:Fig1}D, $K_{24}(m)$ is protected. Had we used the Matrix-Tree Theorem directly, we would have obtained the response as a rational function of the edge labels involving tedious monomial expressions (Eq.~\eqref{e-mtt}). By contrast, the present reformulation compresses the information contained in those exact but cumbersome monomials so effectively that the overall behavior of the response reduces to a simple binary comparison between the two trees highlighted in Fig.~\ref{fig:Fig1}. Yet all the analysis remains fully exact. \emph{Energetic protection} is the first key step toward constructing a thermodynamic switch.
\subsection*{Two-energetic edges}
Our formalism can be easily extended to handle two or more energetic edges. Since two energetic edges are already sufficient to design our switch, we consider that case next.
\begin{theorem}
Consider a Markov process maintained in a nonequilibrium steady state by continuous energy expenditure at two energetic edges,
$\ell(z_1 \to z_2)=m\,\ell_{\mathrm{eq}}(z_1 \to z_2),\ \ell(z_3 \to z_4)=n\,\ell_{\mathrm{eq}}(z_3 \to z_4)$. If the system is perturbed further by varying only one of these energetic edges while keeping the other fixed, then the response function \(K_{ij}(m,n)\) can only be monotonic.
\end{theorem}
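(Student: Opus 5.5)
Without loss of generality I vary $n$ and hold $m$ fixed; the case of varying $m$ follows by relabeling. The idea is to reduce to the single-edge theorem. I treat the system with $\ell(z_1\to z_2)=m\,\ell_{\mathrm{eq}}(z_1\to z_2)$ already imposed as the new baseline. The only subtlety is that this baseline is not at equilibrium, so Eq.~\eqref{e-rhoness} cannot be regrouped directly around $S_{\mathrm{eq}}$ as in Eq.~\eqref{e-arho}. I would therefore work with the matrix-tree form, Eq.~\eqref{e-rhomtt}, or equivalently with the arboreal average using the root $r=z_3$, and keep track explicitly of how $n$ enters.

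\textbf{Step 1: linearity in $n$.} I choose the reference vertex $r=z_3$. No tree in $\Theta_{z_3}(G)$ contains the edge $z_3\to z_4$, because the root has no outgoing edge. Hence the arboreal probabilities $\pr_{\Theta_{z_3}(G)}(T)$ are independent of $n$; they depend on $m$, but $m$ is fixed. Along the unique path $T_i$ from $i$ to $z_3$, the edge $z_3\to z_4$ never appears. The reverse edge $z_4\to z_3$ may appear, and then the action $S(T_i)$ picks up the term $-\ln n$, so $e^{-S(T_i)}$ acquires a factor $n$. All remaining contributions to $S(T_i)$, including any $\pm\ln m$ from the first energetic edge, are $n$-free.

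\textbf{Step 2: the response as a ratio of affine functions.} Grouping trees by whether $z_4\to z_3\in E_{T_i}$ gives
\[
\rho_i(G)=B_0(i)+n\,B_+(i),\qquad B_0(i),\,B_+(i)\ge 0,
\]
where the coefficients $B$ depend on $m$ and the equilibrium labels but not on $n$. I would cross-check this against the matrix-tree form: $\tilde\rho_i$ is a sum of tree weights, each tree uses the edge $z_3\to z_4$ at most once, so $\tilde\rho_i$ is affine in $n$ with nonnegative coefficients. Since $\pi_i^{\mathrm{eq}}/\pi_j^{\mathrm{eq}}$ is a constant, I obtain
\[
K_{ij}(m,n)=c\,\frac{B_0(i)+nB_+(i)}{B_0(j)+nB_+(j)}, \qquad c>0.
\]

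\textbf{Step 3: monotonicity.} Differentiating in $n$ gives
\[
\partial_n K_{ij}=c\,\frac{B_+(i)B_0(j)-B_0(i)B_+(j)}{\bigl[B_0(j)+nB_+(j)\bigr]^2}.
\]
The numerator is independent of $n$ and the denominator is strictly positive, because $\rho_j>0$ by strong connectivity. So the sign of $\partial_n K_{ij}$ is constant, and $K_{ij}$ is monotone in $n$, or constant when the numerator vanishes; this is the analogue of protection.

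The main obstacle is Step 1, namely justifying that no factor $n$ enters except linearly through the reverse edge. This needs the root choice $r=z_3$ and the observation that the arboreal normalization $w(\Theta_{z_3}(G))$ is $n$-free. I expect this to go through, and the matrix-tree argument gives an independent check that does not rely on any equilibrium baseline.
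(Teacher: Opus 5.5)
Your proposal is correct and takes essentially the paper's route, with the roles of $m$ and $n$ (and of $z_1$ and $z_3$) exchanged. The paper roots the arboreal distribution at the source $z_1$ of the varied edge, which makes the tree probabilities $m$-free, writes $\rho_i=\mathrm{e}^{-S_{\mathrm{eq}}(i,r)}\bigl(\mathcal{B}_0(i,n)+m\,\mathcal{B}_+(i,n)\bigr)$, and reads off a derivative of fixed sign. Your matrix-tree cross-check is also valid and proves the result on its own, with no choice of root: each $\tilde\rho_i$ is affine in the varied rate because a spanning tree uses any given edge at most once.
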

\begin{proof}
In the presence of two energetic edges, Eq.~\eqref{e-path_entropies2} extends to
\begin{equation}\label{e-path_entropies_two}
S(P)=S_{\mathrm{eq}}(i,j)+\sigma_{12}(P)\ln(m)+\sigma_{34}(P)\ln(n).
\end{equation}
Or, in a form more compatible with our formalism,
\begin{equation}\label{e-path_entropies_three}
\mathrm{e}^{-S(P)}=\mathrm{e}^{-S_{\mathrm{eq}}(i,j)}\,m^{-\sigma_{12}(P)}n^{-\sigma_{34}(P)}.
\end{equation}
There are nine classes of arboreal trees, corresponding to
\[
(\alpha,\beta)\in (+,0,-)^2,
\]
where \(\alpha=+\) if \(z_2\to z_1\in E_{T_i}\), \(\alpha=-\) if \(z_1\to z_2\in E_{T_i}\), and \(\alpha=0\) otherwise; the same definition applies to \(\beta\) for the second energetic edge. We define
\begin{equation}
A_{\alpha,\beta}(i)
=
\sum_{T\in\Theta_r(G),\; T_i\text{ in class }(\alpha,\beta)}
\mathrm{Pr}_{\Theta_r(G)}(T).
\end{equation}
Then
\begin{align}
\rho_i
&=\mathrm{e}^{-S_{\mathrm{eq}}(i,r)}
\Bigl(
A_{0,0}(i)
+ m\,A_{+,0}(i)
+ \frac{A_{-,0}(i)}{m}
+ n\,A_{0,+}(i)
+ \frac{A_{0,-}(i)}{n}
\nonumber\\
&\qquad\quad
+ \frac{m}{n}\,A_{+,-}(i)
+ \frac{n}{m}\,A_{-,+}(i)
+ \frac{A_{-,-}(i)}{mn}
+ mn\,A_{+,+}(i)
\Bigr).
\end{align}
If we now choose \(z_1\) as the reference vertex, a further simplification occurs as before. Since no tree rooted at \(z_1\) can contain the edge \(z_1\to z_2\), the arboreal distribution becomes \(m\)-free (but not \emph{n-free}), and
\[
A_{-,0}(i)=A_{-,+}(i)=A_{-,-}(i)=0.
\]
Hence
\begin{equation}\label{e-rho-two-edge-simple}
\rho_i
=
\mathrm{e}^{-S_{\mathrm{eq}}(i,r)}
\bigl(\mathcal{B}_0(i,n)+m\,\mathcal{B}_+(i,n)\bigr),
\end{equation}
where
\begin{equation}\label{e-B-def}
\begin{aligned}
\mathcal{B}_0(i,n)
&=
A_{0,0}(i)+n\,A_{0,+}(i)+\frac{A_{0,-}(i)}{n},\\[3pt]
\mathcal{B}_+(i,n)
&=
A_{+,0}(i)+n\,A_{+,+}(i)+\frac{A_{+,-}(i)}{n}.
\end{aligned}
\end{equation}

The information-theoretic response is therefore
\begin{equation}\label{e-Kmn}
K_{ij}(m,n)
=
\frac{\pi_i/\pi_j}{\pi_i^{\mathrm{eq}}/\pi_j^{\mathrm{eq}}}
=
\frac{\mathcal{B}_0(i,n)+m\,\mathcal{B}_+(i,n)}
{\mathcal{B}_0(j,n)+m\,\mathcal{B}_+(j,n)} .
\end{equation}
Setting \(n=1\) recovers the single-edge case. Indeed,
\[
\mathcal{B}_0(i,1)=A_0(i),
\qquad
\mathcal{B}_+(i,1)=A_+(i),
\]
so Eq.~\eqref{e-Kmn} reduces to the one-edge expression, Eq.~\eqref{e-Ksim}. The energetic protection is then equivalent to $K_{ij}(m,1)=1$ for all $m$.

If \(n\) is held fixed and \(m\) is varied, then
\begin{equation}\label{e-partialKmn}
\frac{\partial K_{ij}(m,n)}{\partial m}
=
\frac{
\mathcal{B}_+(i,n)\mathcal{B}_0(j,n)
-
\mathcal{B}_+(j,n)\mathcal{B}_0(i,n)
}
{
\bigl(\mathcal{B}_0(j,n)+m\,\mathcal{B}_+(j,n)\bigr)^2
}.
\end{equation}
The denominator is again strictly positive, while the numerator in Eq.~\eqref{e-partialKmn} is independent of \(m\). Hence the partial derivative has a fixed sign, and the response is monotonic in \(m\). Thus, even though the system must continuously dissipate energy at two edges to maintain a nonequilibrium steady state, varying one energetic edge while keeping the other fixed can produce only a monotonic response.
\end{proof}

This is precisely the mechanism needed to construct a switch-like response function. Suppose the two energetic edges, parameterized by \((m,n)\), can be externally controlled and varied in time according to a prescribed protocol. We first choose parameters such that the response is protected against perturbations in \(m\), as in the first part. Starting from equilibrium and increasing \(m\), the response therefore remains pinned to its equilibrium value,
$\pi_i/\pi_j=\pi_i^\mathrm{eq}/ \pi_j^{\mathrm{eq}}$, even while the system is driven arbitrarily far from thermodynamic equilibrium. At a chosen moment, we then hold \(m\) fixed and activate the second energetic edge, controlled by \(n\). By the second theorem, the response can vary with \(n\) only monotonically when \(m\) is fixed. For a suitable choice of parameters, this monotonic variation can be made switch-like, as shown in Fig.~\ref{fig:Fig2}. Energetic protection, combined with conditional monotonicity far from equilibrium, therefore provides a simple route to a thermodynamic switch.
\begin{figure}[t]
  \centering
  \includegraphics[width=\linewidth]{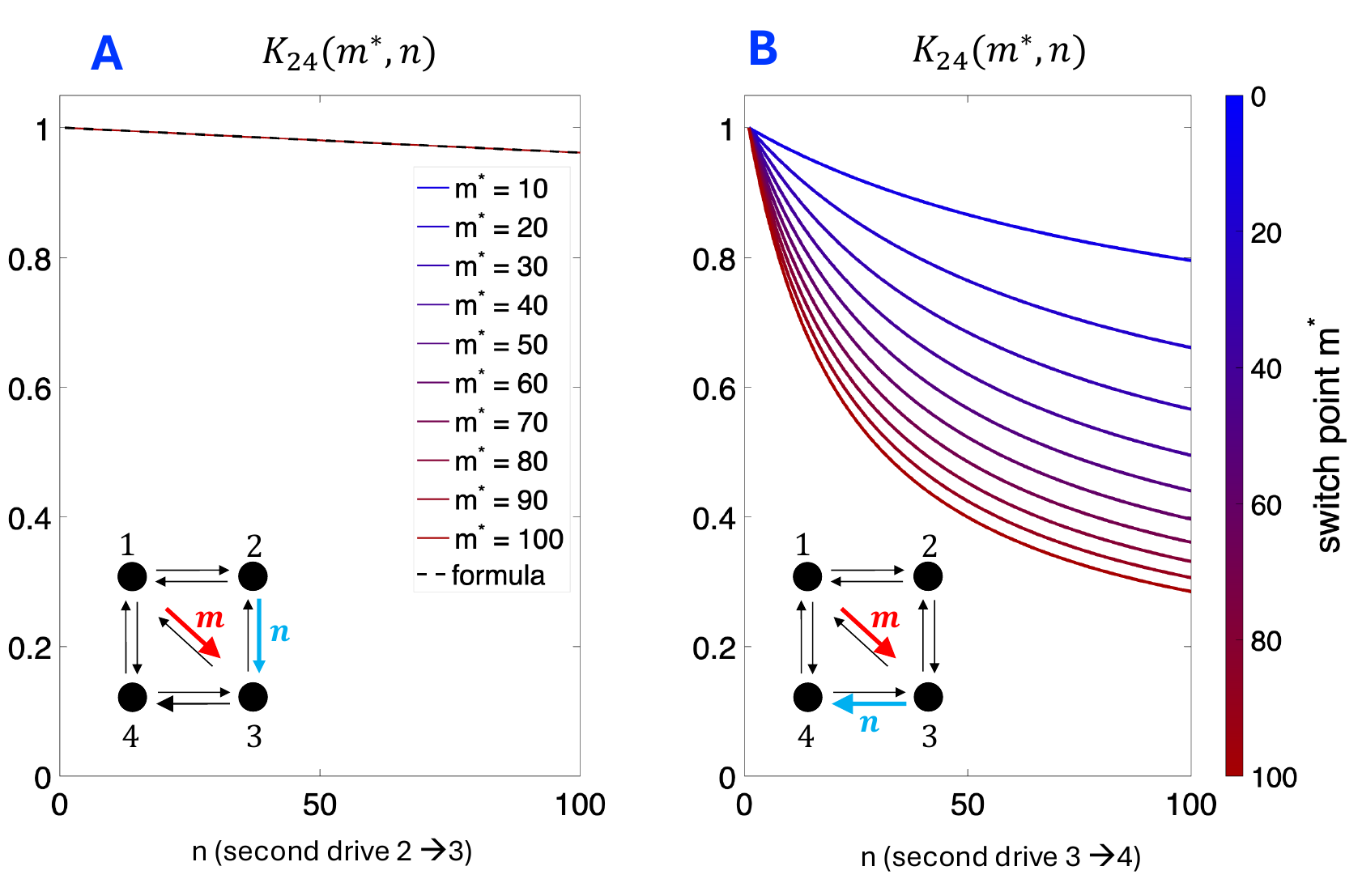}
  \caption{\textbf{Whether the second-stage switch is protocol-independent
depends on the choice of second energetic edge.}
In both panels the first edge $1\to3$ is ramped to a stopping (switch)
value $m^{*}$ and then held fixed while the second edge is ramped
through~$n$; curves are coloured by $m^{*}$, and the insets show the two
energetic edges ($m$ red, $n$ cyan). Both panels use the same equilibrium rates,
which satisfy the protection condition $w(T_{\#6})=w(T_{\#7})$ with regard to the first energetic edge ($1\to3$), and differ only in which edge carries the second drive (Table~\ref{t-3}). (\textbf{A})~Second edge $2\to3$. Energetic protection extends from the single slice $n=1$ to the full identity $W(n)=0$ for all~$n$ (Supplement), so the second-stage response is independent of the switch point: every curve collapses onto the closed form
$K_{24}(n)=1/[\,1+(n-1)\theta_{\mathrm{eq}}\,]$, with
$\theta_{\mathrm{eq}}=\ell_{\mathrm{eq}}(2\to3)/[\,\ell_{\mathrm{eq}}(2\to3)
+\ell_{\mathrm{eq}}(2\to1)\,]$ (dashed).
(\textbf{B})~Second edge $3\to4$. Protection now yields only $W(1)=0$ and not
$W(n)=0$, so the curves no longer collapse: the response depends on the
switch point~$m^{*}$ (colour bar), and $m^{*}$ can be chosen to deepen
the switch.}
  \label{fig:Fig3}
\end{figure}
\subsection*{Optimal m to turn on the second drive}
In a typical application of energetic protection, where the first energetic edge produces a switch-like response as in Fig.~\ref{fig:Fig2}, a natural question arises whether the point at which this first drive is stopped affects the subsequent response. More concretely, suppose that we increase the first energetic parameter $m$ up to some value $m^*$, hold it fixed, and then begin increasing the second energetic parameter $n$. Is there an optimal choice of stopping $m^*$ that improves the dynamical range of the overall response? The answer is yes, and it depends on the numerator of Eq.~\eqref{e-partialKmn}. Define the numerator as
\begin{equation}
W(n)\equiv \mathcal{B}_+(i,n)\mathcal{B}_0(j,n)
-\mathcal{B}_+(j,n)\mathcal{B}_0(i,n).
\end{equation}
Energetic protection is a condition imposed on the slice \(n=1\), and it guarantees only that \(W(1)=0\). Indeed,
\begin{equation}
  \begin{split}
  W(1)
  &= \mathcal{B}_+(i,1)\mathcal{B}_0(j,1)
     -\mathcal{B}_+(j,1)\mathcal{B}_0(i,1) \\
  &= A_+(i)A_0(j)-A_+(j)A_0(i) \\
  &= A_+(i)\bigl(1-A_+(j)\bigr)
     -A_+(j)\bigl(1-A_+(i)\bigr) \\
  &= A_+(i)-A_+(j).
  \end{split}
\end{equation}
Thus, the energetic-protection condition $(A_+(i)=A_+(j))$ gives exactly
$W(1)=0.$ This is only a cancellation at the slice $(n=1)$. It does not, by itself, promote  $W(n)$ to the identically vanishing function
$ W(n)= 0$ for all $n$. Whether this stronger identity holds is a separate question and it depends on the particular choice of energetic edges, and therefore has a topological flavor. If it holds, then the subsequent response is independent of the value $m^*$ at which the first drive is stopped and the second drive is started. That is, there is no distinguished value of $m^*$, and all stopping values of $m^*$ give exactly the same second-stage curve $K(m^*,n)$. For example, for the two energetic edges ($(1\to 3)$ and $(2\to 3)$) used in Fig.~\ref{fig:Fig2}, one can show that energetic protection extends from the single slice $n=1$ to the full identity $W(n)=0$ for all $n$ (See Supplementary Information). Consequently, $K_{24}(m,n)$ behaves in the same way regardless of the value at which $m$ is held fixed before increasing $n$. Indeed, one can show that in this case $K_{24}(m,n)=K_{24}(n)=1/(1+(n-1)\theta_{\text{eq}})$, where $\theta_{\text{eq}}=\ell_{\mathrm{eq}}(2\to 3)/(\ell_{\mathrm{eq}}(2\to 3)+\ell_{\mathrm{eq}}(2\to 1))$. In Fig.~\ref{fig:Fig3}A, we show the second stage of the response $K_{24}$ for various stopping values of $m^*$ under the same protocol used in Fig.~\ref{fig:Fig2}. All response curves are identical, and they collapse onto $K_{24}(n)=1/(1+(n-1)\theta_{\text{eq}})$.

This protocol-independence is not generic, however. If we choose a different second energetic edge, for example $(3\to 4)$, then energetic protection still gives $W(1)=0$, but it does not extend to $W(n)=0$ for all $n$. In that case, the response can depend on the stopping value $m^*$. Since $K_{24}$ is monotone in $m$, one can ramp $m$ as far as the driving budget allows to deepen the switch. Fig.~\ref{fig:Fig3}B shows this idea. Note that the difference between panel A and B is only the choice of the second energetic edge and is independent of the rate constants, as long as they respect the energetic protection constraints. The parameter set used in the creation of Fig.~\ref{fig:Fig3} is provided in Table~\ref{t-3} in the Supplement.

\subsection*{Nonmonotonic response}
\begin{figure}[t]
  \centering
  \includegraphics[width=\linewidth]{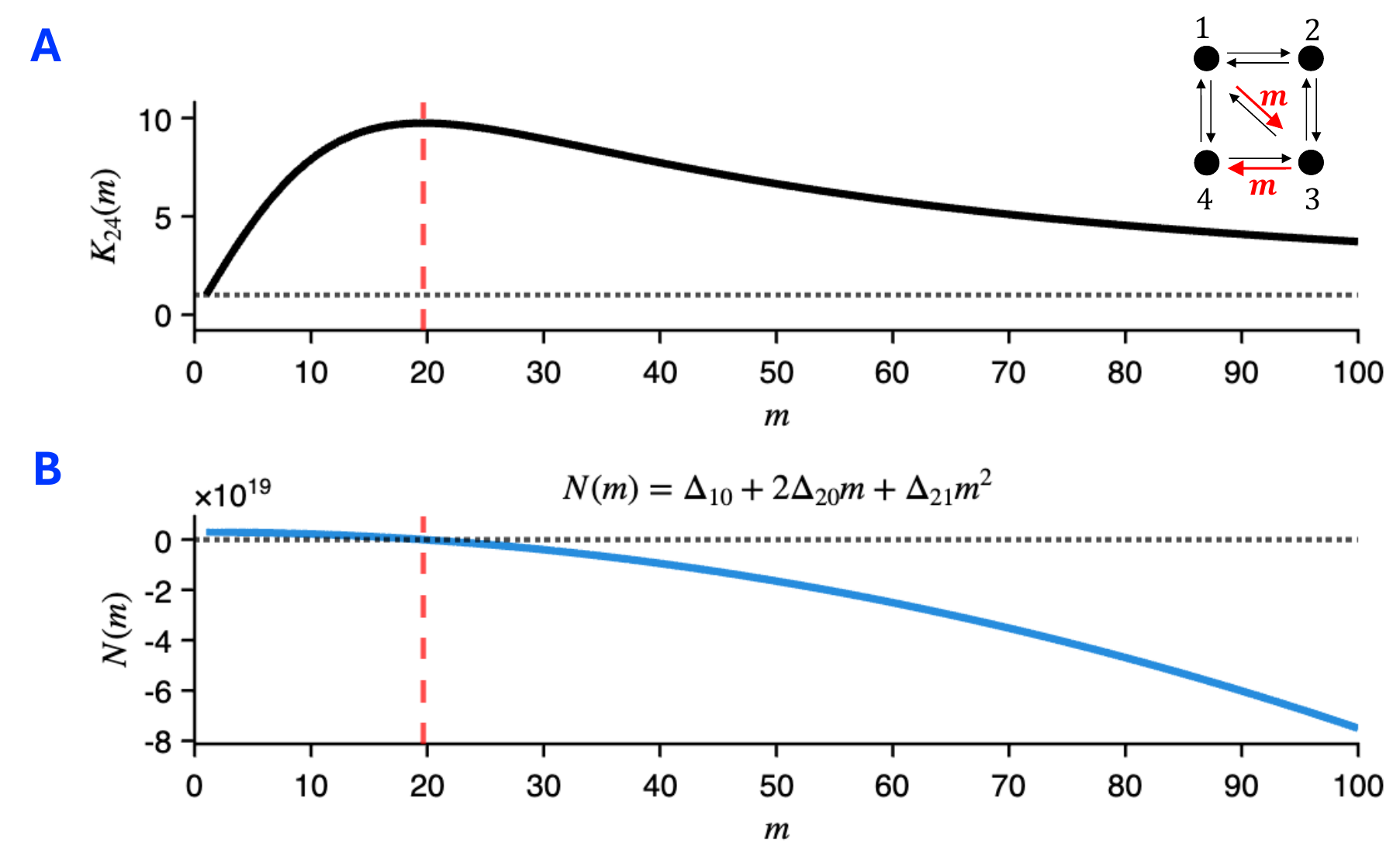}
  \caption{\textbf{Two coupled energetic edges produce a single
nonmonotonic bump.} The energetic edges $1\to3$ and $3\to4$ are perturbed together by a common factor~$m$ from an equilibrium configuration ($m=1$; rates in Table~\ref{t-4}). (\textbf{A})~The normalized response $K_{24}(m)$ rises and then falls, with a single interior maximum. (\textbf{B})~Its slope obeys $\operatorname{sgn}K_{24}'(m)=\operatorname{sgn}N(m)$ with $N(m)=\Delta_{10}+2\Delta_{20}m+\Delta_{21}m^{2}$ [Eq.~\eqref{e-Ndef}]. Here $N(1)>0$ while $N(m)<0$ as $m\to\infty$, so $N$ crosses zero exactly once (dashed line); this root locates the maximum in (\textbf{A}). Because the coefficients $a_k,b_k$ are nonnegative, $N(m)$ can change sign at most once, so two coupled energetic edges never yield more than a single bump.}
\label{fig:Fig4}
\end{figure}
Changing the rate of a single energetic edge, while holding the others fixed, cannot make the response nonmonotonic. Nonmonotonicity can, however, arise when two energetic edges are coupled and varied together through a relation \(n=n(m)\). For concreteness we take \(n=m\), so that both edges are driven by the same parameter. All \(m\)-dependent contributions must then be retained---both the spanning-tree weights and the path contributions entering the steady-state probabilities (see Supplement). The response takes the form
\begin{equation}\label{e-response_quadraticnew}
K_{ij}(m)=\frac{a_0+a_1m+a_2m^2}{b_0+b_1m+b_2m^2},
\end{equation}
where the coefficients \(a_k\) and \(b_k\) are nonnegative sums of equilibrium
spanning-tree weights (Supplement). Differentiating Eq.~\eqref{e-response_quadraticnew} yields
\begin{equation}\label{e-response_derivative_m}
\frac{dK_{ij}}{dm}
=\frac{(a_1b_0-a_0b_1)+2(a_2b_0-a_0b_2)m+(a_2b_1-a_1b_2)m^2}
{(b_0+b_1m+b_2m^2)^2}.
\end{equation}
Because the denominator is a square, \(\operatorname{sgn}K_{ij}'(m)=\operatorname{sgn}N(m)\),
where
\begin{equation}\label{e-Ndef}
N(m)\equiv\Delta_{10}+2\Delta_{20}m+\Delta_{21}m^2,
\qquad
\Delta_{kl}\equiv
\begin{vmatrix} a_k & a_l \\ b_k & b_l \end{vmatrix}
=a_kb_l-a_lb_k.
\end{equation}
Thus \(N(m)\) alone sets the monotonicity: \(K_{ij}\) rises where \(N(m)>0\), falls where \(N(m)<0\), and is stationary where \(N(m)=0\). The response is nonmonotonic precisely when \(N(m)\) changes sign on \((1,\infty)\).

Two boundary slopes frame the picture: the onset, \(\operatorname{sgn}K_{ij}'(1)=\operatorname{sgn}N(1)\) with \(N(1)=\Delta_{10}+2\Delta_{20}+\Delta_{21}\), and the far field, \(\operatorname{sgn}N(m)\to\operatorname{sgn}\Delta_{21}\) as \(m\to\infty\). When these disagree, i.e.\ \(N(1)\,\Delta_{21}<0\), the quadratic \(N(m)\) begins with one sign and ends with the other, so it has exactly one root in \((1,\infty)\). The response then has a single interior extremum---a maximum if \(N\) goes \(+\!\to\!-\), a minimum if \(-\!\to\!+\). Fig.~\ref{fig:Fig4}A shows such a response, \(K_{24}(m)\), obtained by
perturbing the two energetic edges \(1\to 3\) and \(3\to 4\) together from an equilibrium configuration. Its monotonicity is governed by \(N(m)\), plotted in
Fig.~\ref{fig:Fig4}B: here \(N(1)>0\) while \(N(m)<0\) as \(m\to\infty\), so \(N\) crosses zero once, and this single root produces the single bump---the maximum of \(K_{24}(m)\).

One might expect a second route to nonmonotonicity when the boundary slopes instead \emph{agree}, \(\operatorname{sgn}N(1)=\operatorname{sgn}\Delta_{21}\): a double bump would then require two roots of \(N\) inside \((1,\infty)\), i.e.\ a vertex \(m_v=-\Delta_{20}/\Delta_{21}>1\) together with a positive discriminant
\(\Delta_{20}^2-\Delta_{21}\Delta_{10}>0\). With nonnegative coefficients, however, this never occurs. Two roots, say, $r_2>r_1>1$ would give us   $r_2r_1=\Delta_{10}/\Delta_{21}>0$ and $r_1+r_2=-2\Delta_{20}/\Delta_{21}>0$ and would force the sign pattern \((\Delta_{10},\Delta_{21},\Delta_{20})=(+,+,-)\) or \((-,-,+)\), which is impossible as it leads to contradiction (see Supplement). Hence \(N\) changes sign at most once on \((1,\infty)\) and two coupled energetic edges can produce at most a single bump. A double wiggle requires at least three energetic edges. How the topology of energy expenditure shapes nonmonotonicity with more energetic edges and more general couplings \(n=n(m)\) is left for future work. The parameters used in Fig.~\ref{fig:Fig4} are listed in the Supplement.
\section*{Conclusions}
Thermodynamic equilibrium is the natural state toward which many systems relax when left alone. At the same time, thermodynamic equilibrium imposes nontrivial limits and tradeoffs on information-processing capabilities, some of which can be relieved only by energy expenditure. Operating away from equilibrium is therefore a more special situation, and sustaining it requires a continuous cost. In real biochemical systems, this cost is typically supplied by free-energy-consuming processes such as ATP hydrolysis, and importantly, it is not distributed randomly across all reactions but is instead coupled to particular transitions. In our graph-theoretic approach to Markov processes, we capture this through energetic edges.

Our results show that where and how energy is expended matters. More than that, even when the underlying system is driven arbitrarily far from thermodynamic equilibrium, new limitations survive. For example, if energy is expended at only a single transition, then no matter how far the system is driven from equilibrium, a nonmonotonic response cannot arise. Remarkably, the full combinatorial complexity of the nonequilibrium steady state collapses to the comparison of just two quantities, the arboreal coefficients \(A_+(i)\) and \(A_+(j)\), whose ordering alone determines whether the response increases or decreases monotonically. In this sense, nonequilibrium function is controlled not only by free-energy consumption, but by how that free-energy consumption is embedded in the underlying graph, which is exactly captured by the arboreal coefficients.

A particularly interesting consequence is the emergence of a new nonequilibrium phenomenon, \emph{energetic protection}. In addition to time-reversal symmetry, which at thermodynamic equilibrium imposes algebraic constraints on cycle rates, energetic protection requires new algebraic constraints involving equalities between the weights of certain spanning trees of the underlying graph. When these constraints hold, a chosen steady-state ratio remains fixed at its equilibrium value even though the system is driven arbitrarily far from equilibrium. This shows that information-theoretic invariances can survive deep in the nonequilibrium regime. In this sense, energetic protection provides a form of nonequilibrium robustness: dissipation is present, but its functional effect on a chosen observable is exactly neutralized by algebraic constraints arising from relations among certain spanning trees.

With two energetic edges, energetic protection also enables a thermodynamic switch: the response can be held at its equilibrium value for as long as desired and then changed substantially by activating a second drive. This shows that energetic driving can do more than simply amplify or suppress function; it can also store, release, and control function in a programmable way. More broadly, our graph-theoretic formulation gives path-centric access to the fabric of nonequilibrium steady states and suggests a general route for understanding how the localization of free-energy transduction shapes the logic of information processing in stochastic systems.

\bibliographystyle{unsrt} 
\bibliography{refs}
\newpage

\setcounter{equation}{0}\renewcommand{\theequation}{S\arabic{equation}}
\setcounter{table}{0}\renewcommand{\thetable}{S\arabic{table}}
\begingroup
\centering
{\Large\bfseries Supplementary Information}\\[4pt]
{\large Energetic Protection, Monotonicity and Switching Far from Equilibrium}\\[3pt]
\endgroup
\vspace{1.5em}
\section*{Contents of the Supplement}
\noindent
This Supplement collects the derivations and parameter values referenced in the
main text. It is organized into three self-contained sections.

\medskip
\noindent\textbf{\hyperref[si:S1]{S1.\enspace The quadratic response.}}\\
With both energetic edges driven by a common factor ($n=m$), the normalized
response $K_{ij}(m)$ is shown to be a ratio of two quadratics in $m$ with
non-negative coefficients, establishing Eq.~\eqref{e-response_quadraticnew} of
the main text.

\smallskip
\noindent\textbf{\hyperref[si:S2]{S2.\enspace Impossibility of a double bump.}}\\
The numerator $N(m)$ of $\partial_m K_{ij}$ is shown to admit at most one
positive root, so two coupled energetic edges can turn the response over at most
once.

\smallskip
\noindent\textbf{\hyperref[si:S3]{S3.\enspace Energetic protection and the
protocol-independent switch.}}\\
For driving along $1\!\to\!3$ with second energetic edge $2\!\to\!3$, the switch
kernel is shown to vanish identically, $W(n)\equiv0$, so the second-stage
response collapses onto the closed form
$K_{24}(n)=[\,1+(n-1)\,\theta_{\mathrm{eq}}\,]^{-1}$, independent of the
$1\!\to\!3$ protocol.

\medskip
\noindent
Parameter values for all figures of the main text are collected in
Tables~\ref{t-1}--\ref{t-4}.

\bigskip
\hrule
\bigskip
\phantomsection\label{si:S1}
\section*{S1.\quad Derivation of the quadratic response, Eq.~\eqref{e-response_quadraticnew}}

When the two energetic edges are coupled, so that
$\ell(z_1\!\to\!z_2)=m\,\ell_{\mathrm{eq}}(z_1\!\to\!z_2)$ and
$\ell(z_3\!\to\!z_4)=m\,\ell_{\mathrm{eq}}(z_3\!\to\!z_4)$, every $m$-dependent
contribution must be collected---both from the spanning trees and from the simple
paths that enter the steady-state probabilities. For a spanning tree
$T\in\Theta_r(G)$, let
\begin{equation}\label{e-uT}
u(T)=\mathbf{1}_{\{z_1\to z_2\,\in\,E_T\}}
    +\mathbf{1}_{\{z_3\to z_4\,\in\,E_T\}}\in\{0,1,2\}
\end{equation}
count how many energetic edges $T$ contains, where $\mathbf{1}_{\{\cdot\}}$ is
the indicator function. Since each energetic edge present multiplies its
equilibrium rate by $m$, the tree weight factorizes as
\begin{equation}\label{e-treeweight}
w(T)=m^{u(T)}\,w_{\mathrm{eq}}(T),
\qquad
w_{\mathrm{eq}}(T)=\prod_{e\,\in\,E_T}\ell_{\mathrm{eq}}(e).
\end{equation}
To track the full $m$-dependence of trees and paths together, define
\begin{equation}\label{e-xT}
x(T,i)=u(T)-\sigma_{12}(T_i)-\sigma_{34}(T_i).
\end{equation}
Because $T_i\subseteq T$, an energetic edge lying on the path is necessarily
present in the tree, so each of the differences
$\mathbf{1}_{\{z_1\to z_2\in E_T\}}-\sigma_{12}(T_i)$ and
$\mathbf{1}_{\{z_3\to z_4\in E_T\}}-\sigma_{34}(T_i)$ takes only the values $0$
or $1$. Hence $x(T,i)\in\{0,1,2\}$. Combining
Eq.~\eqref{e-path_entropies_three} with Eq.~\eqref{e-treeweight} gives
\begin{equation}\label{e-nonmono-ratio}
\frac{\rho_i}{\rho_j}
=\frac{\mathrm{e}^{-S_{\mathrm{eq}}(i,r)}}{\mathrm{e}^{-S_{\mathrm{eq}}(j,r)}}\,
 \frac{\displaystyle\sum_{T\in\Theta_r(G)}m^{x(T,i)}\,w_{\mathrm{eq}}(T)}
      {\displaystyle\sum_{T\in\Theta_r(G)}m^{x(T,j)}\,w_{\mathrm{eq}}(T)}
=\frac{\mathrm{e}^{-S_{\mathrm{eq}}(i,r)}}{\mathrm{e}^{-S_{\mathrm{eq}}(j,r)}}\,
 \frac{\displaystyle\sum_{k\in X}a_k\,m^{k}}{\displaystyle\sum_{k\in X}b_k\,m^{k}},
\end{equation}
where $X=\{0,1,2\}$ and the second equality groups the trees by their common
value of $x$, with non-negative coefficients
\begin{equation}\label{e-akbk}
a_k=\!\!\sum_{\substack{T\in\Theta_r(G)\\ x(T,i)=k}}\!\! w_{\mathrm{eq}}(T),
\qquad
b_k=\!\!\sum_{\substack{T\in\Theta_r(G)\\ x(T,j)=k}}\!\! w_{\mathrm{eq}}(T).
\end{equation}
Since $\pi_i/\pi_j=\rho_i/\rho_j$ and
$\mathrm{e}^{-S_{\mathrm{eq}}(i,r)}/\mathrm{e}^{-S_{\mathrm{eq}}(j,r)}
=\pi_i^{\mathrm{eq}}/\pi_j^{\mathrm{eq}}$, the equilibrium prefactor cancels in
the normalized response, leaving
\begin{equation}\label{e-Kquad-si}
K_{ij}(m)=\frac{\sum_{k\in X}a_k\,m^{k}}{\sum_{k\in X}b_k\,m^{k}}
=\frac{a_0+a_1m+a_2m^2}{b_0+b_1m+b_2m^2},
\end{equation}
which is Eq.~\eqref{e-response_quadraticnew} of the main text.

\phantomsection\label{si:S2}
\section*{S2.\quad Impossibility of two positive roots for two coupled edges}

Recall from the main text that
$\operatorname{sgn}K_{ij}'(m)=\operatorname{sgn}N(m)$ with
\begin{equation}\label{e-Nsi}
N(m)=\Delta_{10}+2\Delta_{20}\,m+\Delta_{21}\,m^2,
\qquad
\Delta_{kl}=a_kb_l-a_lb_k .
\end{equation}
A double bump on $(1,\infty)$ would require $N(m)$ to have two positive roots.
Suppose, for contradiction, that such roots exist, $r_2>r_1>0$. By Vi\`ete's
relations,
\begin{equation}\label{e-vieta}
r_1r_2=\frac{\Delta_{10}}{\Delta_{21}}>0,
\qquad
r_1+r_2=-\frac{2\Delta_{20}}{\Delta_{21}}>0,
\end{equation}
so the ordered triple $(\Delta_{10},\Delta_{21},\Delta_{20})$ must carry the sign
pattern $(+,+,-)$ or $(-,-,+)$. We show that each pattern is contradictory.
Throughout we use that the $b_k$ are strictly positive, so dividing a minor
$\Delta_{kl}=a_kb_l-a_lb_k$ by $b_kb_l>0$ preserves its sign and yields
$a_k/b_k-a_l/b_l$.

\medskip
\noindent\emph{Pattern $(\Delta_{10},\Delta_{21},\Delta_{20})=(+,+,-)$.}
\begin{equation}\label{e-caseA}
\begin{aligned}
\Delta_{10}=a_1b_0-a_0b_1>0
  &\;\Longrightarrow\; \tfrac{a_1}{b_1}-\tfrac{a_0}{b_0}>0,\\
\Delta_{21}=a_2b_1-a_1b_2>0
  &\;\Longrightarrow\; \tfrac{a_2}{b_2}-\tfrac{a_1}{b_1}>0,\\
\Delta_{20}=a_2b_0-a_0b_2<0
  &\;\Longrightarrow\; \tfrac{a_2}{b_2}-\tfrac{a_0}{b_0}<0.
\end{aligned}
\end{equation}
Adding the first two inequalities gives $a_2/b_2-a_0/b_0>0$, contradicting the
third.

\medskip
\noindent\emph{Pattern $(\Delta_{10},\Delta_{21},\Delta_{20})=(-,-,+)$.}
The same manipulation reverses every inequality,
\begin{equation}\label{e-caseB}
\begin{aligned}
\Delta_{10}<0 &\;\Longrightarrow\; \tfrac{a_1}{b_1}-\tfrac{a_0}{b_0}<0,\\
\Delta_{21}<0 &\;\Longrightarrow\; \tfrac{a_2}{b_2}-\tfrac{a_1}{b_1}<0,\\
\Delta_{20}>0 &\;\Longrightarrow\; \tfrac{a_2}{b_2}-\tfrac{a_0}{b_0}>0,
\end{aligned}
\end{equation}
and adding the first two now gives $a_2/b_2-a_0/b_0<0$, again contradicting the
third. Hence $N(m)$ cannot possess two positive roots: it changes sign at most
once on $(1,\infty)$, and two coupled energetic edges can produce at most a
single bump.

\phantomsection\label{si:S3}
\section*{S3.\quad Derivation of $W(n)\equiv0$ under energetic protection}

Throughout this section it is convenient to relabel the equilibrium rates as
\begin{equation}\label{e-relabel}
\begin{gathered}
a=\ell_{\mathrm{eq}}(2\to1),\quad
b=\ell_{\mathrm{eq}}(2\to3),\quad
c=\ell_{\mathrm{eq}}(3\to1),\quad
d=\ell_{\mathrm{eq}}(3\to2),\\[2pt]
e=\ell_{\mathrm{eq}}(3\to4),\quad
f=\ell_{\mathrm{eq}}(4\to1),\quad
g=\ell_{\mathrm{eq}}(4\to3).
\end{gathered}
\end{equation}
For every spanning tree rooted at vertex~$1$, Table~\ref{t-1} lists the unique
simple paths from vertex~$2$ ($T_2$) and vertex~$4$ ($T_4$) to the root,
together with their $(\alpha,\beta)$ labels: $\alpha=+$ if $z_2\to z_1\in
E_{T_i}$, $\alpha=-$ if $z_1\to z_2\in E_{T_i}$, and $\alpha=0$ otherwise, with
the same convention defining $\beta$ for the second energetic edge. These labels
identify the trees that contribute to each $A_{\alpha,\beta}(i)$ for $i=2,4$.
Because the second energetic edge $2\to3$ can lie in a tree rooted at $1$, the
tree weights $w(T)$ and hence the coefficients $A_{\alpha,\beta}(i)$ retain
their dependence on $n$.

Combining Eq.~\eqref{e-B-def} with Table~\ref{t-1}, we obtain for vertex~$2$
\begin{equation}\label{e-B-vertex2}
\begin{aligned}
\mathcal{B}_0(2,n)
&=\frac{1}{w(\Theta_1(G))}\bigl(aef+adf+adg+acf+acg+bef\bigr),\\[3pt]
\mathcal{B}_+(2,n)
&=\frac{1}{w(\Theta_1(G))}\bigl(bcf+bcg\bigr).
\end{aligned}
\end{equation}
Similarly, for vertex~$4$,
\begin{equation}\label{e-B-vertex4}
\begin{aligned}
\mathcal{B}_0(4,n)
&=\frac{1}{w(\Theta_1(G))}\bigl(n\,(bef+bcf+adg)+aef+adf+acf\bigr),\\[3pt]
\mathcal{B}_+(4,n)
&=\frac{1}{w(\Theta_1(G))}\bigl(acg+n\,bcg\bigr).
\end{aligned}
\end{equation}
Therefore, $W(n)=\mathcal{B}_+(2,n)\,\mathcal{B}_0(4,n)-\mathcal{B}_+(4,n)\,\mathcal{B}_0(2,n)$ can be written as follows,
\begin{equation}\label{e-Wsplit}
W(n)=\frac{1}{w(\Theta_1(G))^{2}}\bigl(W_0+n\,W_1\bigr),
\end{equation}
with
\begin{align}
W_0&=\bigl(bcf+bcg\bigr)\bigl(aef+adf+acf\bigr)
   -acg\bigl(aef+adf+adg+acf+acg+bef\bigr),\label{e-W0}\\[3pt]
W_1&=\bigl(bcf+bcg\bigr)\bigl(bef+bcf+adg\bigr)
   -bcg\bigl(aef+adf+adg+acf+acg+bef\bigr).\label{e-W1}
\end{align}
We now show that the energetic-protection condition $w(T_{\#6})=w(T_{\#7})$,
i.e.\ $acg=bcf$, or equivalently $ag=bf$, forces both $W_0=0$ and $W_1=0$, so
that $W(n)\equiv0$.

\subsubsection*{Vanishing of $W_0$}
Factoring the first product of Eq.~\eqref{e-W0} and dividing by $c>0$,
\begin{equation}\label{e-W0-step}
\frac{W_0}{c}=abf\,(f+g)(c+d+e)
-ag\bigl(aef+adf+adg+acf+acg+bef\bigr).
\end{equation}
Expanding the two products,
\begin{equation}\label{e-W0-expand}
\begin{aligned}
abf\,(f+g)(c+d+e)
&=abcf^2+abdf^2+abef^2+abcfg+abdfg+abefg,\\[3pt]
ag\bigl(aef+adf+adg+acf+acg+bef\bigr)
&=a^2efg+a^2dfg+a^2dg^2+a^2cfg+a^2cg^2+abefg.
\end{aligned}
\end{equation}
Taking the difference and factoring, $W_0$ collapses to
\begin{equation}\label{e-W0-factored}
W_0=ac\,(bf-ag)\,(cf+df+ef+cg+dg),
\end{equation}
which vanishes precisely when $ag=bf$.

\subsubsection*{Vanishing of $W_1$}
Factoring $bc>0$ out of Eq.~\eqref{e-W1},
\begin{equation}\label{e-W1-step}
\frac{W_1}{bc}=(f+g)(adg+bcf+bef)-g\,(acf+acg+adf+aef+adg+bef).
\end{equation}
Expanding,
\begin{equation}\label{e-W1-expand}
\begin{aligned}
(f+g)(adg+bcf+bef)
&=adfg+bcf^2+bef^2+adg^2+bcfg+befg,\\[3pt]
g\,(acf+acg+adf+aef+adg+bef)
&=acfg+acg^2+adfg+aefg+adg^2+befg.
\end{aligned}
\end{equation}
The terms $adfg$, $adg^2$, and $befg$ cancel, leaving
\begin{equation}\label{e-W1-mid}
\frac{W_1}{bc}=\bigl(bcf^2+bcfg+bef^2\bigr)-\bigl(acfg+acg^2+aefg\bigr).
\end{equation}
Replacing $bf=ag$ maps the first group onto the second term by term,
\begin{equation}\label{e-W1-map}
bcf^2\to acfg,\qquad bcfg\to acg^2,\qquad bef^2\to aefg,
\end{equation}
so that $W_1=0$; equivalently,
\begin{equation}\label{e-W1-factored}
W_1=bc\,(bf-ag)\,(cf+cg+ef).
\end{equation}
With $W_0=0$ and $W_1=0$ we conclude $W(n)\equiv0$ for all $n$, as claimed.

\subsubsection*{Collapse of the switch}
Because $W(n)\equiv0$, we have $\partial K_{24}/\partial m=0$: the response is
independent of $m$ and may be evaluated at any convenient value. Taking
$m\to\infty$ in Eq.~\eqref{e-Kmn} retains only the leading coefficients,
\begin{equation}\label{e-Klimit}
K_{24}(m,n)=\frac{\mathcal{B}_+(2,n)}{\mathcal{B}_+(4,n)}
=\frac{bc\,(f+g)}{cg\,(a+nb)}
=\frac{b\,(f+g)}{g\,(a+nb)}.
\end{equation}
Imposing the protection condition $bf=ag$ (so that $f=ag/b$) gives
$b(f+g)=ag+bg=g(a+b)$, and therefore
\begin{equation}\label{e-Kab}
K_{24}(m,n)=\frac{a+b}{a+nb}.
\end{equation}
Finally, writing
\begin{equation}\label{e-theta}
\frac{a+nb}{a+b}=1+\frac{(n-1)\,b}{a+b}=1+(n-1)\,\theta_{\mathrm{eq}},
\qquad
\theta_{\mathrm{eq}}=\frac{b}{a+b}
=\frac{\ell_{\mathrm{eq}}(2\to3)}{\ell_{\mathrm{eq}}(2\to3)+\ell_{\mathrm{eq}}(2\to1)},
\end{equation}
we obtain the protocol-independent switch
\begin{equation}\label{e-Kfinal}
K_{24}(m,n)=K_{24}(n)=\frac{1}{1+(n-1)\,\theta_{\mathrm{eq}}},
\end{equation}
as quoted in the main text and plotted in Fig.~\ref{fig:Fig3}A.

\begin{table}[t]
\centering
\renewcommand{\arraystretch}{1.25}
\setlength{\tabcolsep}{10pt}
\begin{tabular}{c|c|c|c|c}
\toprule
\textbf{Tree} & $\boldsymbol{T_2}$ & $\boldsymbol{(\alpha,\beta)}$
              & $\boldsymbol{T_4}$ & $\boldsymbol{(\alpha,\beta)}$ \\
\midrule
$T_{\#1}$ & $2\to3\to4\to1$ & $(0,-)$ & $4\to1$          & $(0,0)$ \\
$T_{\#2}$ & $2\to1$          & $(0,0)$ & $4\to1$          & $(0,0)$ \\
$T_{\#3}$ & $2\to1$          & $(0,0)$ & $4\to1$          & $(0,0)$ \\
$T_{\#4}$ & $2\to1$          & $(0,0)$ & $4\to3\to2\to1$ & $(0,+)$ \\
$T_{\#5}$ & $2\to1$          & $(0,0)$ & $4\to1$          & $(0,0)$ \\
$T_{\#6}$ & $2\to1$          & $(0,0)$ & $4\to3\to1$      & $(+,0)$ \\
$T_{\#7}$ & $2\to3\to1$      & $(+,-)$ & $4\to1$          & $(0,0)$ \\
$T_{\#8}$ & $2\to3\to1$      & $(+,-)$ & $4\to3\to1$      & $(+,0)$ \\
\bottomrule
\end{tabular}
\caption{Tree-level data for the four-state network with energetic edges
$1\to3$ and $2\to3$. For every spanning tree rooted at vertex~$1$, the table
lists the unique simple paths $T_2$ and $T_4$ from vertices~$2$ and~$4$ to the
root, together with their $(\alpha,\beta)$ classes. These classes identify the
trees contributing to each $A_{\alpha,\beta}(i)$, and hence to the coefficients
$\mathcal{B}(i,n)$ that enter the derivation of $W(n)\equiv0$ in Sec.~S3.}
\label{t-1}
\end{table}

\begin{table}[t]
\centering
\renewcommand{\arraystretch}{1.25}
\begin{tabular}{@{}ll@{\hspace{3em}}ll@{}}
\toprule
Rate & Value & Rate & Value \\
\midrule
$\ell_{\mathrm{eq}}(2\to1)$ & $6.8910$              & $\ell_{\mathrm{eq}}(1\to2)$ & $2.6282\times10^{-3}$ \\
$\ell_{\mathrm{eq}}(3\to1)$ & $3.1539\times10^{-3}$ & $\ell_{\mathrm{eq}}(1\to3)$ & $0.0232$              \\
$\ell_{\mathrm{eq}}(4\to1)$ & $3.7021\times10^{-3}$ & $\ell_{\mathrm{eq}}(1\to4)$ & $67.2453$             \\
$\ell_{\mathrm{eq}}(3\to4)$ & $1.2147\times10^{4}$  & $\ell_{\mathrm{eq}}(4\to3)$ & $4.9210$              \\
$\ell_{\mathrm{eq}}(3\to2)$ & $0.4748$              & $\ell_{\mathrm{eq}}(2\to3)$ & $9.1599\times10^{3}$  \\
\bottomrule
\end{tabular}
\caption{Equilibrium transition rates $\ell_{\mathrm{eq}}(i\to j)$ for the
four-state network of Fig.~\ref{fig:Fig2} (thermodynamic switch). The rates
satisfy detailed balance, so the undriven system ($m=n=1$) sits at equilibrium,
and they obey the energetic-protection condition $w(T_{\#6})=w(T_{\#7})$, i.e.\
$\ell_{\mathrm{eq}}(2\to1)\,\ell_{\mathrm{eq}}(4\to3)
=\ell_{\mathrm{eq}}(2\to3)\,\ell_{\mathrm{eq}}(4\to1)$.}
\label{t-2}
\end{table}

\begin{table}[t]
\centering
\renewcommand{\arraystretch}{1.25}
\begin{tabular}{@{}ll@{\hspace{3em}}ll@{}}
\toprule
Rate & Value & Rate & Value \\
\midrule
$\ell_{\mathrm{eq}}(2\to1)$ & $35.9740$             & $\ell_{\mathrm{eq}}(1\to2)$ & $0.0040$  \\
$\ell_{\mathrm{eq}}(3\to1)$ & $89.4797$             & $\ell_{\mathrm{eq}}(1\to3)$ & $0.0016$  \\
$\ell_{\mathrm{eq}}(4\to1)$ & $1.8719\times10^{3}$  & $\ell_{\mathrm{eq}}(1\to4)$ & $0.0220$  \\
$\ell_{\mathrm{eq}}(3\to4)$ & $0.4914$              & $\ell_{\mathrm{eq}}(4\to3)$ & $0.7557$  \\
$\ell_{\mathrm{eq}}(3\to2)$ & $0.0895$              & $\ell_{\mathrm{eq}}(2\to3)$ & $0.0145$  \\
\bottomrule
\end{tabular}
\caption{Equilibrium transition rates $\ell_{\mathrm{eq}}(i\to j)$ used in
Fig.~\ref{fig:Fig3} (both panels). As in Table~\ref{t-2}, they satisfy detailed
balance and the protection condition $w(T_{\#6})=w(T_{\#7})$. The two panels of
Fig.~\ref{fig:Fig3} share these rates and differ only in the second energetic
edge ($2\to3$ in~A, $3\to4$ in~B), isolating the effect of that choice on the
protocol dependence of the switch.}
\label{t-3}
\end{table}

\begin{table}[t]
\centering
\renewcommand{\arraystretch}{1.25}
\begin{tabular}{@{}ll@{\hspace{3em}}ll@{}}
\toprule
Rate & Value & Rate & Value \\
\midrule
$\ell_{\mathrm{eq}}(2\to1)$ & $19.6867$   & $\ell_{\mathrm{eq}}(1\to2)$ & $0.4093$              \\
$\ell_{\mathrm{eq}}(3\to1)$ & $4.3067$    & $\ell_{\mathrm{eq}}(1\to3)$ & $924.5939$            \\
$\ell_{\mathrm{eq}}(4\to1)$ & $395.0976$  & $\ell_{\mathrm{eq}}(1\to4)$ & $398.0417$            \\
$\ell_{\mathrm{eq}}(3\to4)$ & $0.0048$    & $\ell_{\mathrm{eq}}(4\to3)$ & $1.0137$              \\
$\ell_{\mathrm{eq}}(3\to2)$ & $99.5666$   & $\ell_{\mathrm{eq}}(2\to3)$ & $1.0280\times10^{6}$  \\
\bottomrule
\end{tabular}
\caption{Equilibrium transition rates $\ell_{\mathrm{eq}}(i\to j)$ for the
coupled-edge example of Fig.~\ref{fig:Fig4}, in which the edges $1\to3$ and
$3\to4$ are driven together by a common factor~$m$. The rates satisfy detailed
balance, so the undriven system ($m=1$) sits at equilibrium; unlike
Tables~\ref{t-2} and~\ref{t-3}, they need not satisfy any protection condition.}
\label{t-4}
\end{table}
\end{document}